\theoremstyle{plain}
\newtheorem{theorem}{Theorem}
\newtheorem{lemma}{Lemma}
\newtheorem{corollary}{Corollary}
\theoremstyle{definition}
\newtheorem{definition}[theorem]{Definition}
\newtheorem{example}[theorem]{Example}
\theoremstyle{remark}
\newtheorem*{acks}{Acknowledgments}
\newcommand{\conv}{\otimes}
\newcommand{\deconv}{\oslash}
\newcommand{\E}{\ensuremath{\mathbf{E}}}
\newcommand{\p}{\ensuremath{\mathbf{P}}}
\newcommand{\R}{\ensuremath{\mathbb{R}}}
\newcommand{\N}{\ensuremath{\mathbb{N}}}
\newcommand{\bu}{\bullet}
\author{
	Anne Bouillard\\
	Huawei Technologies France\\
	\texttt{anne.bouillard@huawei.com}
	\and
	Paul Nikolaus, Jens Schmitt\\
	Distributed Computer Systems (DISCO) Lab\\
	TU Kaiserslautern\\
	\texttt{\{nikolaus,jschmitt\}@cs.uni-kl.de}
}
\date{}
\title{Unleashing the Power of Paying Multiplexing Only Once in Stochastic Network Calculus}
\begin{document}
	\maketitle
	\begin{abstract}
		The stochastic network calculus (SNC) holds promise as a framework to calculate probabilistic performance bounds in networks of queues. 
		A great challenge to accurate bounds and efficient calculations are stochastic dependencies between flows due to resource sharing inside the network. 
		However, by carefully utilizing the basic SNC concepts in the network analysis the necessity of taking these dependencies into account can be minimized. 
		To that end, we fully unleash the power of the pay multiplexing only once principle (PMOO, known from the deterministic network calculus) in the SNC analysis. 
		We choose an analytic combinatorics presentation of the results in order to ease complex calculations. 
		In tree-reducible networks, a subclass of a general feedforward networks, we obtain a perfect analysis in terms of avoiding the need to take internal flow dependencies into account. 
		In a comprehensive numerical evaluation, we demonstrate how this unleashed PMOO analysis can reduce the known gap between simulations and SNC calculations significantly, and how it favourably compares to state-of-the art SNC calculations in terms of accuracy and computational effort. 
		Driven by these promising results, we also consider general feedforward networks, when some flow dependencies have to be taken into account. 
		To that end, the unleashed PMOO analysis is extended to the partially dependent case and a case study of a canonical example topology, known as the diamond network, is provided, again displaying favourable results over the state of the art.
	\end{abstract}

	\section{Introduction}
\label{sec:intro}
Stochastic network calculus (SNC) is a promising uniform framework to calculate probabilistic end-to-end performance bounds for individual flows in networks of queues. The most prominent goal is to control tail probabilities for the end-to-end (e2e) delay, i.e., probabilities for rare events shall be bounded, e.g., $ \p({\text{e2e delay} > 10 \text{ms}}) \leq 10^{-6} $.
Many modern systems are eager after such guarantees, as exemplified in visions like, e.g. Tactile Internet \cite{Fettweis14} or Industrial IoT \cite{BHCW18}.

SNC originates in the deterministic analysis by Rene Cruz \cite{Cruz91_1, Cruz91_2} and was subsequently supplemented by the use of min-plus algebra \cite{BCOQ92}.
In the following years, it was transferred to a stochastic setting \cite{Chang00, CBL06, Fidler06, JL08, CS12-1}.
Over the course of almost 30 years, two main branches of SNC have evolved: either by characterizing arrivals and service by envelope functions / tail bounds \cite{Cruz96, CBL06, JL08}, or, by moment-generating function  (MGF) bounds \cite{Chang00, Fidler06}.
While a larger class of processes can be modelled with tail bounds, \cite{RF11} comes to the conclusion that using MGFs leads to tighter bounds under the assumption of independence.

It should be mentioned that the \textit{uniform} approach of SNC, in particular to apply 
the union bound to evaluate sample-path events, comes at a price: already in the single-node case there is a known gap between simulations and SNC calculations (see Figure~\ref{fig:single-node-intro} on the next page for some typical numerical results). In fact, there is a tight analysis for the single-node case for some traffic classes based on martingale techniques \cite{CPS14, PC14} (avoiding the use of the union bound);  yet, an end-to-end martingale analysis remains an elusive goal. Therefore, in this paper, we keep following the uniform approach of SNC and concentrate our efforts on not widening the simulation-calculation gap further when analysing larger and more complex networks.

Analysing more general networks of queues, in particular feedforward networks, usually requires the consideration of stochastically dependent flows. 
Even if all external arrival and service processes are independent, the sharing of resources by individual flows at queues generally makes them stochastically dependent at subsequent queues. 
How much this kind of dependencies has to be taken into account is affected by the network analysis method because different methods require different levels of knowledge about the internal characterization of flows.
Further on, we call these dependencies \textit{method-pertinent}. 
To deal with (method-pertinent) stochastic dependencies in SNC, typically, the MGF of, e.g. the sum of dependent flows is upper bounded by H\"older's inequality (HI).
While HI's generality is convenient and consistent with SNC's aspiration as a uniform framework, it often incurs the drawback of degrading bound accuracy considerably, further widening the simulation-calculation gap.
In addition, it increases the computational effort by introducing an additional parameter to optimize for each application of HI, such that in larger scenarios runtimes quickly become prohibitive (see also Subsection~\ref{subsec:extend-interleaved-tandem}). 
Consequently, SNC analysis methods with less method-pertinent dependencies are strongly favourable as they require less invocations of HI.
In fact, previous work in relatively simple network scenarios has indicated that techniques which completely avoid HI achieve significantly better delay bounds \cite{ZBH16, NS17-1}.


\begin{figure*}[tbp]		
	\centering
		\subfloat[Original topology \label{fig:intro-toy}]{
		\resizebox{0.38\textwidth}{!}{%
		\begin{tikzpicture}[server/.style={shape=rectangle,draw,minimum height=.8cm,inner xsep=3ex}]
			\node[server,name=S1] at (0,0) {$1$};
			\node[server,name=S2] at (2,0) {$2$};
			\node[server,name=S3] at (4,0) {$3$};
			\draw[->, thick, red] (-1,0) node[left] {$1$} -- (5,0);
			\draw[->, thick, blue] (-1,0.3) node[left] {$2$} to[out=-10,in=-170] (2.7,.3);
			\draw[->, thick, green!70!black]  (1,-0.3) node[left] {$3$} to[out=10,in=170] (5,-.3);
		\end{tikzpicture}
	}%
	}
	\hspace{5mm}
	\subfloat[\centering Reduction to end-to-end server 
	\hspace{\textwidth} after network analysis \label{fig:e2e-server}]{
	\includegraphics[width=0.42\textwidth]{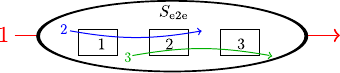}
	}
	\vspace{-2mm}
	\caption{Interleaved tandem network \label{fig:interleaved-tandem-nw-analysis}}
	\vspace{-2mm}
\end{figure*}

\begin{figure*}[b]
	\vspace{-5mm}
	\centering
	\subfloat[Single-node case \label{fig:single-node-intro}]{
		\includegraphics[width=0.3\textwidth]{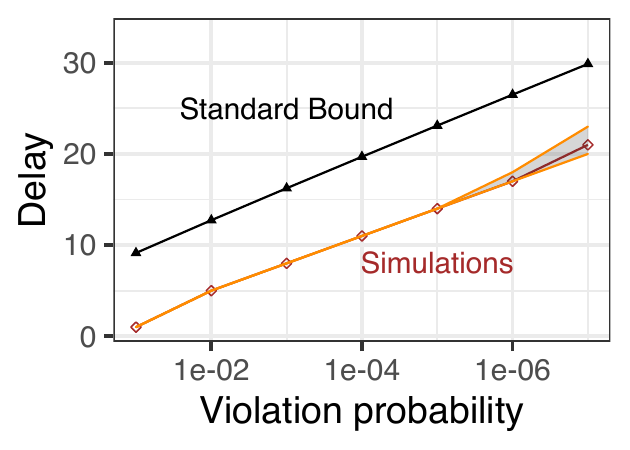}
	}
	\hspace{5mm}
	\subfloat[Interleaved tandem \label{fig:interleaved-tandem-intro}]{
		\includegraphics[width=0.3\textwidth]{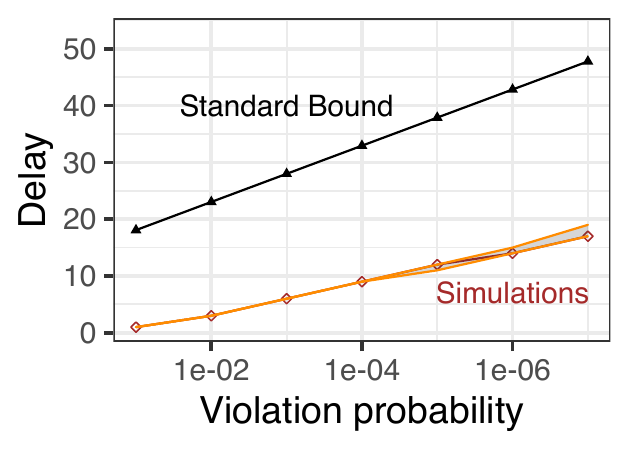}
	}
	\vspace{-1mm}
	\caption{SNC delay bounds and simulation results (for traffic with exponentially distributed increments and constant rate servers). For more details, see Section~\ref{sec:numerical}. \label{fig:delay-bounds-intro}}
\end{figure*}

Different network analysis methods have been investigated intensively in the deterministic setting; some better known ones are, e.g. Separate Flow Analysis (SFA) or Pay multiplexing only once (PMOO) \cite{SZ06-1}.
These typically try to reduce the analysis (e.g. for the delay) of a particular flow in the network to a simple analysis of this flow traversing a single server; the characteristics of that so-called end-to-end server depend on the cross-flows and the servers of the original network, see also Figure~\ref{fig:interleaved-tandem-nw-analysis}.
The main goal for the network analysis in the deterministic case was to properly account for the bursts of the flows to ensure that they are only "paid once". In particular, the PMOO
 principle \cite{SZ06-1, SZM08-1, BGLT08, Fidler03} tries to ensure this for all flows in the network by, whenever possible, concatenating servers first before calculating residual service by "subtracting" cross-flows. As discussed above, in SNC the main concern is dependencies, yet it turns out that, to some extent, this is related to the proper accounting of bursts. 
In fact, in \cite{NS17-1}, it has been observed that the PMOO analysis known from deterministic network calculus
\cite{SZ06-1, SZM08-1, BGLT08, Fidler03} leads to less method-pertinent dependencies compared to SFA and is thus also promising in an SNC analysis.

However, the application of PMOO in the SNC has, so far, been limited to so-called nested interference structures -- this is very restrictive.
For instance, for a network with interleaved interference as in Figure~\ref{fig:intro-toy}, state-of-the-art analysis still requires at least one application of HI, even if we assume all external arrival processes to be independent. 
This is also illustrated in Figure~\ref{fig:interleaved-tandem-intro} (anticipating some of the results from Section~\ref{sec:numerical}): 
method-pertinent dependencies force the state-of-the-art SNC bounds to deteriorate significantly; in particular, it is observable that, in comparison to the single-node case, even the scaling of the delay bounds is not captured correctly any more and the simulation-calculation gap widens considerably even for such a small network.

The overall goal of our paper is therefore to unleash the power of the PMOO principle in the SNC framework in order to not widen the simulation-calculation gap further even in more complex and larger networks of queues.
To that end, we make the following contributions:
\begin{itemize}
	\item We present a PMOO-based SNC end-to-end analysis for a subclass of feedforward networks, so-called tree-reducible networks; the main result is given in Theorem~\ref{th:e2e-sc}.
	It achieves \textit{zero} method-pertinent stochastic dependencies 
	when external arrivals and service processes are independent.
	I.e., if all input flows are assumed to be independent, we can derive bounds without using H\"older's inequality. 
	Also, Theorem~\ref{th:e2e-sc} allows us to calculate the residual service in one big step avoiding the  sequencing penalty in previous network analysis methods.
	\item We apply analytic combinatorics \cite{FS09} to recover bounds from state-of-the-art analysis methods in simple networks and enable a generalization to more complex settings.
\item We conduct an extensive numerical evaluation with respect to the accuracy of the new bounds for several traffic classes and different network topologies.
	\item We discuss first results to extend our new method from tree-reducible to general feedforward networks, still striving for the goal to minimize method-pertinent dependencies.
\end{itemize}

\vspace{-1mm}

	\section{MGF-based Network Calculus and Analytic Combinatorics}
\label{sec:framework}
In this section, we briefly introduce the two main tools we employ in our paper, namely the MGF-based stochastic network calculus in Subsection~\ref{sec:mgf-snc} and analytic combinatorics in Subsection~\ref{sec:comb}. Furthermore, we show in Subsection~\ref{sec:snc-comb} how analytic combinatorics can be used to express complex calculations in MGF-based SNC in a concise manner. 

\vspace{-2mm}

\subsection{Stochastic network calculus framework}
\label{sec:mgf-snc}
We present the main concepts of the moment-generating function (MGF)-based network calculus for the single-server case (the network case is postponed to Section~\ref{sec:pmoo} in the context of our new contribution).
More details can be found in~\cite{Chang00, Fidler06}.
We assume time is discrete and space is continuous. 
We deal with bivariate functions and always assume that their definition domain is $\{(s, t)\in\N^2~|~s\leq t\}$ and that they are non-negative: $f(s, t) \geq 0$ for all $s\leq t$. 

\paragraph{Bivariate arrival processes}
A bivariate (arrival) process $A$ of a flow at some point in the network represents the amount of data of the flow  traversing that point of the network during any interval of time: let $a_i$ be the amount of data during the $i$-th time slot; we define for all $s\leq t$, $A(s, t) = \sum_{i=s+1}^t a_i$, with the convention $A(t, t) = 0$. Note that bivariate arrival processes are additive: for all $s\leq u\leq t$, $A(s, u) + A(u, t) = A(s, t)$.

\paragraph{(min, plus) operations} To describe the transformation between bivariate processes when traversing a server, we rely on the following operations in the (min,plus)-algebra. Let $f$ and $g$ be two bivariate functions. 

\begin{itemize}
	\item \textit{(min, plus)-convolution:} $f\conv g(s, t) = \min_{s\leq u\leq t} f(s, u) + g(u, t)$;
	\item  \textit{(min, plus)-deconvolution:} $f\deconv g(s, t) = \max_{0\leq u\leq s} f(u, t) - g(u, s)$.
\end{itemize} 

\paragraph{$S$-servers} Let $S$ be a non-negative bivariate function. A server is a dynamic $ S $-server if the relation between its bivariate arrival and departure processes $A$ and $D$ satisfies for all $t\geq 0$, $A(0, t)\geq D(0, t) \geq A\conv S(0, t)$. 

This notion of server can be too weak in some situations when performing a network analysis, and we introduce work-conserving $S$-servers: assume that $S(t,t) = 0$ for all $t$ and let $start(t) = \sup \left\{s\leq t~|~A(0, s) = D(0, s)\right\}$ be the last instant before $t$ when the server is empty. We also call it the \textit{start of the backlogged period} of $t$. 
Note that from this definition, if the server is idle at time $t$, then $start(t) = t$.
We say that the $S$-server is \textit{work-conserving} if for all $t$ and $s\in\{start(t),\ldots,  t\}$,  $D(0, t) \geq D(0, s) + S(s, t)$. In other words, the service offered between times $s$ and $t$ is at least $S(s, t)$, provided that the server is always backlogged between times $s$ and $t$. 
We can always assume that $S$ is additive, whereas this does not have to hold for dynamic $S$-servers. 

The notions of dynamic $S$-server and of work-conserving $S$-server, respectively, correspond to the notions of service curve and strict service curve in deterministic network calculus \cite{BBLC18}. Note that a work-conserving $S$-server is also a dynamic $S$-server. 
	
\paragraph{Departure process characterization and performance bounds}
Consider a dynamic $S$-server and $A$ and $D$ its respective bivariate arrival and departure processes. 

The backlog at time $t$ is $q(t) = A(0, t) - D(0, t)$ and the virtual delay at time $t$ is $d(t) = \inf\{ T \in\N \mid A(0,t) \leq D(0, t + T)\}$.  

\begin{theorem}[Sample-Path Bounds~\cite{Chang00, Fidler06}]
	\label{thm:sample-path-bounds}
	Consider a dynamic $S$-server traversed by a flow with bivariate arrival process $A$. Then,
	\begin{itemize}
		\item $D(s, t) \leq A\deconv S(s, t)$;
		\item $q(t) \leq A\deconv S(t, t)$;
		\item $d(t) \geq T \Rightarrow \exists s \leq t,~A(s, t) > S(s, t+T - 1).$
	\end{itemize}
\end{theorem}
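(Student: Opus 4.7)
The plan is to prove all three bullets directly from the defining inequality of a dynamic $S$-server, $A(0,t)\geq D(0,t) \geq A\conv S(0,t)$, combined with the additivity of bivariate processes and the fact that $A$ and $S$ are non-negative. I would address the items in the stated order; item~(2) will essentially drop out as the special case $s=t$ of item~(1), and item~(3) will require a small additional argument about the range of the optimizing index.

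For the departure-process bound, the first step is to write $D(s,t) = D(0,t) - D(0,s)$ by additivity, upper-bound $D(0,t) \leq A(0,t)$, and lower-bound $D(0,s) \geq A\conv S(0,s) = \min_{0\leq u\leq s}[A(0,u) + S(u,s)]$. Flipping the $\min$ into a $\max$ on the right-hand side of the inequality and using $A(0,t) - A(0,u) = A(u,t)$ (again by additivity) yields $D(s,t) \leq \max_{0\leq u\leq s}[A(u,t) - S(u,s)]$, which is exactly $A\deconv S(s,t)$.

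For the backlog bound, specializing to $s=t$ (or equivalently, just subtracting the $S$-server lower bound on $D(0,t)$ from $A(0,t)$) immediately gives $b(t) = A(0,t) - D(0,t) \leq \max_{0\leq u\leq t}[A(u,t) - S(u,t)] = A\deconv S(t,t)$.

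For the virtual-delay implication, I would unpack the definition: $d(t) > T$ means $A(0,t) > D(0,t+T)$. Substituting the lower bound $D(0,t+T) \geq \min_{0\leq u\leq t+T}[A(0,u) + S(u,t+T)]$, the strict inequality forces the existence of some $u \in [0,t+T]$ with $A(0,t) > A(0,u) + S(u,t+T)$. The only subtle point is that we need $s := u$ to satisfy $s\leq t$; if instead $u > t$, then $A(0,u) \geq A(0,t)$ (since $A$ is additive and non-negative, hence non-decreasing in its second argument) and $S(u,t+T) \geq 0$, contradicting the strict inequality. Thus $u \leq t$, and additivity of $A$ gives $A(s,t) = A(0,t) - A(0,s) > S(s,t+T)$. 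The main (and really only) obstacle is this index-range check in item~(3); the rest is direct manipulation of the defining inequality and additivity.
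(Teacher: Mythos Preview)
The paper does not actually prove this theorem; it is stated as a standard result of the MGF-based network calculus framework (with references to \cite{Chang00, Fidler06}) and no proof is given. Your proposal is correct and is the standard argument: all three items follow from the defining inequality $A(0,t)\geq D(0,t)\geq A\conv S(0,t)$ together with additivity and non-negativity, and the only non-mechanical step---ruling out $u>t$ in item~(3) via monotonicity of $A(0,\cdot)$ and non-negativity of $S$---is handled correctly.
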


\paragraph{$(\sigma, \rho)$-constraints}
MGF-based SNC relies on bounds on the MGF and the Laplace transform for the arrival and service process, respectively. 
Specifically, we assume that $A$ and $S$ are bivariate stochastic processes, and that $\E[e^{\theta A(s,t)}]$ and $\E[e^{-\theta S(s,t)}]$ both exist for some $ \theta > 0. $ 
We say that  the bivariate arrival process $A$ is $(\sigma_A, \rho_A)$-constrained if there exist $\theta>0$ and $\sigma_A(\theta), \rho_A(\theta)\in\R_+ $ 
 such that $\E[e^{\theta A(s, t)}] \leq e^{\theta(\sigma_A(\theta) + \rho_A(\theta)(t-s))}$. 
Similarly, we say that  the bivariate service process $S$ is $(\sigma_S, \rho_S)$-constrained if there exist $\theta>0$ and $\sigma_S(\theta), \rho_S(\theta)\in\R_+$ 
such that $\E[e^{-\theta S(s, t)}] \leq e^{\theta(\sigma_S(\theta) - \rho_S(\theta)(t-s))}$.

Such linear bounds in the transform space provide the advantage that they are closed with respect to network calculus operations, e.g. multiplexing of flows.
However, more general bounds are not fundamentally precluded and in some of the following results we also provide this generality when it does not harm the clarity of the presentation.


	\vspace{-2mm}

	\subsection{Analytic combinatorics framework}
	\label{sec:comb}
	Let $f = (f_n)_{n\in\N} \in\R_+^{\N}$ be a sequence of non-negative numbers. The (ordinary) generating function (related to the $z$-transform) associated to $f$ is the function $F(z) = \sum_{n = 0}^\infty f_n z^n$.  An important example in the following is the geometric sequence $f_n = a^n$, for which
	\begin{equation}
		\label{eq:geom}
	F(z) = \sum_{n = 0}^\infty a^nz^n = \frac{1}{1- az}.	
	\end{equation}
	Given a generating function $F(z)$, we denote the $n$-th term of its associated sequence by $[z^n]F(z)$. In other words, with the previous notation, $[z^n]F(z) = f_n$.  
	Analytic combinatorics~\cite{FS09} is a branch of combinatorics whose aim is the study of such sequences, and, in particular, the relation between the singularities of the generating function and the asymptotic behaviour of $f$. 
	We denote by $r_F$ the dominant singularity of $F$. In the following, we only use rational fractions, and the dominant singularity corresponds to the smallest root of the denominator. For example, the dominant singularity of $F(z)$ in Equation~\eqref{eq:geom} is $r_F =  a^{-1}$.
	 
	Next, we apply the framework of analytic combinatorics to simplify some SNC calculations, in particular, when computing error bounds for backlogs and delays. 
	Let us first present some operations on sequences and their associated generating functions. 

	\paragraph{Cauchy product: } Let $f = (f_n)_{n\in\N}$ and $g = (g_n)_{n\in\N}$ be two sequences. The Cauchy product of these sequences is $h_{\mathrm{Cau}} = (h_{\mathrm{Cau},n})_{n\in\N}$ with $h_{\mathrm{Cau},n} = \sum_{p=0}^n f_pg_{n-p}$. 
	The generating function of $h_\mathrm{Cau}$ is $$H_\mathrm{Cau}(z) = \sum_{n =  0}^\infty \sum_{p+q = n} f_pg_q z^n = (\sum_{p = 0}^\infty f_p z^p) (\sum_{q = 0}^\infty g_q z^q) = F(z)G(z),$$
	and is also called the Cauchy product of $F$ and $G$. 
	The dominant singularity of $H_\mathrm{Cau}$ is $r_{H_\mathrm{Cau}} = \min(r_F, r_G)$.
	In the SNC, the (min,plus)-convolution corresponds to a Cauchy product.  
	
	\paragraph{Hadamard product: }  Let $f = (f_n)_{n\in\N}$ and $g = (g_n)_{n\in\N}$ be two sequences. The Hadamard product of these sequences is $h_\mathrm{Had} = (h_{\mathrm{Had},n})_{n\in\N}$ with $h_{\mathrm{Had},n} = f_n g_n$.
	 There is no simple expression for the generating function of the Hadamard product in the general case. 
	 Yet, in the case where $f = (a^n)_{n\in\N}$ is a geometric sequence, the Hadamard product of $F$ and $G$ is
	\begin{equation}
		\label{eq:hadamard}
		H_\mathrm{Had}(z) = \sum_{n = 0}^\infty a^n g_n z^n = G(az).
	\end{equation}
The dominant singularity of $H_\mathrm{Had}$ is $r_{H_\mathrm{Had}} = r_F \cdot r_G$.
In the SNC, the Hadamard product is used when computing the (min,plus)-deconvolution.
	
	\paragraph{Sum of last  terms:} Let $f = (f_n)_{n\in\N}$ be a summable sequence. Let $g_n = \sum_{m \geq n} f_m$. The generating function of $g = (g_n)_{n\in\N}$, proved in Appendix~\ref{sec:elie}, is
	\begin{equation}
		\label{eq:elie}
		G(z) = \frac{F(1) - zF(z)}{1-z}.
	\end{equation}
	Note that 1 is not a singularity of $G(z)$ if it is not a singularity of $F$. Consequently, $F$ and $G$ have the same dominant singularity.
	This operation is used when computing delay bounds.

	
	\vspace{-2mm}
	
	\subsection{A combinatorial view of MGF-based network calculus}
	\label{sec:snc-comb}
	The purpose of this subsection is to present the $(\sigma, \rho)$-constraints on arrival and service processes as generating functions. This alleviates the calculation of residual service constraints in complex scenarios, thus enabling accurate performance bounds. 
	The use of analytic combinatorics in not new in the field of queueing theory. For example, Flajolet and Guillemin in \cite{FG00} analyze Markovian queues, Guillemin and Pinchin in~\cite{GP04}  study the processor  sharing policy, and more recently,~\cite{BCPM18} presents an attempt to use analytical combinatorics in Network Calculus. The approach of all these references is to model the distributions of the queues by a generating functions and use the machinery to compute the asymptotic behavior of the queue. Here, we use generating functions only for the purpose of simplifying the computations of the SNC framework. To our knowledge, this is the first time this tool is used for this purpose.
	
	\begin{definition}
		Let $A$ be a bivariate arrival process. An \textit{arrival bounding generating function} of $A$ at $\theta$ is a function $F_A(\theta, z)$ such that for all $s, n\geq 0$, 
		$\E[e^{\theta A(s, s+n)}] \leq [z^n]F_A(\theta, z) $.
	\end{definition}

	For example, if $A$ is $(\sigma_A,\rho_A)$-constrained and $\sigma_A(\theta), \rho_A(\theta)\in\R_+$, 
	\[
		F_A(\theta, z) = \sum_{n = 0}^\infty e^{\theta(\sigma_A(\theta) + \rho_A(\theta)n)} z^n = \frac{e^{\theta\sigma_A(\theta)}}{1-e^{\theta\rho_A(\theta)}z}
	\]
	is an arrival bounding generating function for $A$ at $\theta$.
	
	Similarly, if $S$ is a service process,  a \textit{service bounding generating function} of $S$ at $\theta$ is a function $F_S(\theta, z)$, such that for all $s, n\geq 0$, 
	$\E[e^{-\theta S(s, s+n)}] \leq [z^n]F_S(\theta, z)
	$. 
	
	In the following, it shall be clear from the context whether a bounding generating function is an arrival or a service one. 
	Therefore, we omit this precision for the sake of readability.
	
	Using bounding generating functions of $A$ and $S$ at $\theta$,
	 we can derive bounding generating functions for the departure process and the violation probability of the probabilistic delay bound. The backlog bound can be directly deduced from the departure process. 
	The following results provide bounding generating functions for the departure process and the performance bounds.


	
	\begin{lemma}[Departure process bounding generating function]
		\label{lem:output}
		Consider a dynamic $S$-server traversed by a flow with bivariate process $A$ that is $(\sigma_A, \rho_A)$-constrained. Assume that $S$ has the service bounding generating function $F_S$  and that  the arrival and service processes $A$ and $S$ are independent.
		Then, for all $\theta$ such that $e^{-\theta \rho_A(\theta)} r_S(\theta) < 1$, an arrival bounding generating function of the departure process $D$ is
		$$F_D(\theta, z) = \frac{e^{\theta\sigma_A(\theta)}F_S(\theta, e^{\theta\rho_A(\theta)})}{1-e^{\theta\rho_A(\theta)}z}.$$
	\end{lemma}

	\begin{proof}
		As $F_A(\theta, z)$ is geometric, we have 
		$[z^{n+m}]F_A(\theta, z) = e^{\theta\rho_A(\theta)m}[z^n]F_A(\theta, z)$.
		Since $D(s, t) \leq A\deconv S(s, t) = \sup_{u\leq s} A(u, t) - S(u, s)$, then for $\theta >0$,
		\begin{align*}
			\E[e^{\theta A\deconv S(s, t)}] & \leq \sum_{u\leq s}  \E[e^{\theta (A(u, t) - S(u, s))}]\\
			&= \sum_{u\leq s}  \E[e^{\theta A(u, t)}]\cdot \E[e^{ - \theta S(u, s)}]\\
			&\leq \sum_{u\leq s}  [z^{t-u}]F_A(\theta, z)\cdot [z^{s-u}]F_S(\theta, z)  \\
			&=  e^{\theta\rho_A(\theta) (t-s)} \sum_{u\leq s}  [z^{s-u}]F_A(\theta, z)\cdot [z^{s-u}]F_S(\theta, z)\\ 
			&\leq  e^{\theta\rho_A(\theta) (t-s)} \sum_{n\geq 0}  [z^n]F_A(\theta, z)\cdot [z^n]F_S(\theta, z) \\ 
			&= e^{\theta\rho_A(\theta) (t-s)} e^{\theta\sigma_A(\theta)} F_S(\theta, e^{\theta\rho_A(\theta)})\\
			& = [z^{t-s}] F_A(\theta, z) F_S(\theta, e^{\theta\rho_A(\theta)}).
		\end{align*} 
		To obtain the second to last equality, 
		we applied the Hadamard product to a geometric series as in Equation~\eqref{eq:hadamard} with $z=1$.
		We now recognize the $ (t-s) $-th term of a geometric series with ratio $e^{\theta\rho_A(\theta)}$, so $$F_D(\theta, z) = \frac{e^{\theta\sigma_A(\theta)}F_S(\theta, e^{\theta\rho_A(\theta)})}{1-e^{\theta\rho_A(\theta)}z}$$
		is a bounding generating function for the departure process. 
	\end{proof}
	\begin{corollary}[Probabilistic backlog bound]
		\label{lem:backlog}
		Let us denote $q(t)$ as the backlog at time $t$. 
		Then, under the same assumptions as in Lemma~\ref{lem:output},
		$$\p(q(t) \geq b)  \leq e^{-\theta b} e^{\theta\sigma_A(\theta)} F_S(\theta, e^{\theta\rho_A(\theta)}).$$
	\end{corollary}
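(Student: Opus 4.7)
The plan is to reduce the corollary to a direct application of the Chernoff bound combined with the $s=t$ specialization of the computation already carried out in Lemma~\ref{lem:output}. Concretely, I would proceed in four short steps.

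First, I would invoke the sample-path backlog bound from the Sample-Path Bounds theorem, namely $q(t) = b(t) \leq A\deconv S(t,t)$. Because the exponential is monotone and $\theta > 0$, this gives the pointwise inequality $e^{\theta q(t)} \leq e^{\theta A\deconv S(t,t)}$ on the whole sample space, and therefore $\E[e^{\theta q(t)}] \leq \E[e^{\theta A\deconv S(t,t)}]$. Second, I would apply Markov's inequality to the non-negative random variable $e^{\theta q(t)}$ to write
\[
\p(q(t) > b) = \p\bigl(e^{\theta q(t)} > e^{\theta b}\bigr) \leq e^{-\theta b}\, \E[e^{\theta q(t)}].
\]

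Third, I would borrow the main chain of inequalities established in the proof of Lemma~\ref{lem:output}, which after using independence of $A$ and $S$ and the geometric structure of $F_A$ produced the bound
\[
\E[e^{\theta A\deconv S(s,t)}] \leq [z^{t-s}] F_A(\theta,z)\, F_S(\theta, e^{\theta\rho_A(\theta)}).
\]
Specializing this to $s=t$ collapses the generating function extraction to the constant term: $[z^0] F_A(\theta,z) = e^{\theta\sigma_A(\theta)}$, yielding $\E[e^{\theta A\deconv S(t,t)}] \leq e^{\theta\sigma_A(\theta)} F_S(\theta, e^{\theta\rho_A(\theta)})$.

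Fourth, combining the three previous bounds gives the claimed inequality. There is essentially no obstacle here, since all the heavy lifting (the Hadamard product step that converts the deconvolution into the evaluation of $F_S$ at $e^{\theta\rho_A(\theta)}$) has already been done inside Lemma~\ref{lem:output}; the only thing worth being careful about is ensuring that the independence hypothesis assumed just before Lemma~\ref{lem:output} is explicitly invoked so that the expectation of the product factorizes, and that the value $\theta$ used for the Chernoff bound is the same $\theta$ for which $F_A(\theta,z)$ and $F_S(\theta,z)$ are valid bounding generating functions.
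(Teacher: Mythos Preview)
Your proposal is correct and follows essentially the same approach as the paper: the paper's proof also invokes the sample-path bound $q(t)\leq A\deconv S(t,t)$, applies the Chernoff bound, and then reads off the $s=t$ case (i.e.\ the constant term $[z^0]F_D(\theta,z)$) of the computation from Lemma~\ref{lem:output}. Your write-up is slightly more explicit about the intermediate steps, but the argument is the same.
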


\begin{proof}
	As from Theorem~\ref{thm:sample-path-bounds} we have $q(t) \leq A\deconv S(t, t)$, we can apply the Chernoff bound using the first term of the generating function $F_D(\theta, z)$ of Lemma~\ref{lem:output}:  
	\begin{align*}
		\p(q(t) \geq b) & \leq \E[e^{\theta A\deconv S(t, t)}]e^{-\theta b} \\
		&\leq e^{\theta\sigma_A(\theta)} F_S(\theta, e^{\theta\rho_A(\theta)})e^{-\theta b}.
	\end{align*}
\end{proof}
In the particular case of $F_S(\theta, z) =\frac{e^{\theta\sigma_S(\theta)}}{1-e^{-\theta\rho_S(\theta)}z}$, for all $\theta\geq 0$ such that $\rho_A(\theta) < \rho_S(\theta)$,
$$\p(q(t) \geq b)  \leq \frac{e^{\theta(\sigma_A(\theta) + \sigma_S(\theta))}}{1-e^{\theta(\rho_A(\theta) - \rho_S(\theta))}} e^{-\theta b}.$$

We have made the assumption of $(\sigma_A, \rho_A)$-constraints for the arrival processes, and our proofs rely on this assumption. Yet, we remark that the calculations can be done in a more general setting, at the price of more complex formulas. For example, if the arrival bounding generating function of the arrival process $A$ is a sum of geometric series, as for example for Markov-modulated arrivals, the calculation directly follows, and the arrival bounding generating function of $D$ would be a sum of geometric series as well. Another case where computations can be adapted is when the service is bounded by a geometric series: symmetric computations can be done.

\begin{lemma}[Delay bound generating function]
	\label{lem:delay}
	Let us denote $d(t)$ as the virtual delay at time $t$. 
	Under the same assumptions as in Lemma~\ref{lem:output}, 
	a bounding generating function for the violation probability of the probabilistic delay bound is
	\begin{equation}
		F_d(\theta, z) = e^{\theta \sigma_A(\theta)} \frac{e^{\theta\rho_A(\theta)}F_S(\theta, e^{\theta\rho_A(\theta)}) -  zF_S(\theta, z)}{1-e^{-\theta\rho_A(\theta)}z}.
	\end{equation}
\end{lemma}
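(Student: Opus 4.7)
The plan is to follow the same Chernoff-plus-union-bound template used for Lemma~\ref{lem:output}, but now starting from the third sample-path bound of the theorem: $d(t) > T \Rightarrow \exists s \leq t,\ A(s, t) > S(s, t+T)$. Taking a union bound over $s$ and then applying Chernoff at parameter $\theta > 0$ gives
$$\p(d(t) > T) \leq \sum_{s \leq t} \E[e^{\theta(A(s, t) - S(s, t+T))}].$$
Using independence of the arrival and service processes, each expectation factors and each factor is bounded by the corresponding coefficient of its bounding generating function, so with $n := t-s$ and extending the summation range to all $n \geq 0$,
$$\p(d(t) > T) \leq \sum_{n \geq 0} a_n \, s_{n+T}, \qquad a_n := [z^n]F_A(\theta,z),\ s_m := [z^m]F_S(\theta,z).$$

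Next I would exploit the geometric shape $a_n = e^{\theta \sigma_A(\theta)} e^{\theta \rho_A(\theta) n}$ of $F_A$ to pull out a factor of $e^{\theta \sigma_A(\theta)} e^{-\theta \rho_A(\theta) T}$. After the substitution $m = n + T$ in the remaining sum, the bound reads
$$\p(d(t) > T) \leq e^{\theta \sigma_A(\theta)} \, e^{-\theta \rho_A(\theta) T} \sum_{m \geq T} e^{\theta \rho_A(\theta) m} s_m.$$
The strategy is now to read off the generating function (in $T$) of the right-hand side by chaining the two identities from Subsection~\ref{sec:comb}: a Hadamard product with a geometric, then a sum-of-last-terms, then another Hadamard with a geometric.

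Concretely, set $\tilde{s}_m := e^{\theta \rho_A(\theta) m} s_m$; by \eqref{eq:hadamard} applied with $r = e^{\theta \rho_A(\theta)}$, the generating function of $(\tilde{s}_m)$ is $\tilde{S}(z) = F_S(\theta, e^{\theta \rho_A(\theta)} z)$. Set $v_T := \sum_{m \geq T} \tilde{s}_m$; by \eqref{eq:elie}, the generating function of $(v_T)$ is $V(z) = (\tilde{S}(1) - z\tilde{S}(z))/(1-z)$. The outer factor $e^{-\theta \rho_A(\theta) T}$ multiplying $v_T$ is a final Hadamard product with a geometric of ratio $e^{-\theta \rho_A(\theta)}$, so \eqref{eq:hadamard} applied once more yields
$$F_d(\theta, z) = e^{\theta \sigma_A(\theta)} V(e^{-\theta \rho_A(\theta)} z) = e^{\theta \sigma_A(\theta)} \frac{F_S(\theta, e^{\theta \rho_A(\theta)}) - e^{-\theta \rho_A(\theta)} z F_S(\theta, z)}{1 - e^{-\theta \rho_A(\theta)} z},$$
which is the claimed expression. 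The only real obstacle is bookkeeping: one must track the index shifts carefully through the two applications of \eqref{eq:hadamard} on either side of the sum-of-last-terms step, and in particular distinguish $\tilde{S}(1) = F_S(\theta, e^{\theta \rho_A(\theta)})$ (a scalar) from the $F_S(\theta, z)$ that remains in the numerator after the final Hadamard substitution.
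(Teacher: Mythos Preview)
Your proposal is correct and follows essentially the same approach as the paper's proof: union bound plus Chernoff from the sample-path delay criterion, then exploit the geometric form of $F_A$ to rewrite the bound as a tail sum of a Hadamard product, and finally chain the Hadamard identity~\eqref{eq:hadamard}, the sum-of-last-terms identity~\eqref{eq:elie}, and one more Hadamard substitution to obtain $F_d$. The only cosmetic difference is that the paper keeps the $e^{\theta\sigma_A(\theta)}$ factor inside the Hadamard sequence $h_s=[z^s]F_A\cdot[z^s]F_S$ until the very end, whereas you extract it up front; the algebra is otherwise identical.
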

A proof can be found in Appendix~\ref{sec:delay}.
In the particular case of $F_S(\theta, z) =\frac{e^{\theta\sigma_S(\theta)}}{1-e^{-\theta\rho_S(\theta)}z}$, for all $\theta \geq 0$ such that $\rho_A(\theta) < \rho_S(\theta)$,
$F_d(\theta, z) = \frac{e^{\theta(\sigma_A(\theta)+\sigma_S(\theta) + \rho_A(\theta))}}{1-e^{-\theta(\rho_S(\theta) - \rho_A(\theta))}} \cdot \frac{1}{1-e^{-\theta \rho_S(\theta)}z}$, 
so $$\p(d(t)\geq T) \leq \frac{e^{\theta(\sigma_A(\theta)+\sigma_S(\theta) + \rho_A(\theta))}}{1-e^{-\theta(\rho_S(\theta) - \rho_A(\theta))}} e^{-\theta\rho_S(\theta)T}.$$ 

\bigskip

So far, we have focused on the case of independent stochastic processes, let us now release this assumption.
Since our analysis is based on moment-generating functions, the common approach in the literature is to upper bound the product of dependent processes by Hölder's inequality:
let $X_1, \ldots, X_n \geq 0$ be integrable random variables;
then, for all positive numbers $q_1, \ldots, q_n$ such that $\sum_{i=1}^n \frac{1}{q_i} = 1$, 
\begin{equation}
	\label{eq:holder}
	\E[X_1 \cdots X_n] \leq \prod_{i=1}^n \E[X_i^{q_i}]^{\frac{1}{q_i}}.
\end{equation}
Note that $\E[e^{p\theta A(s, t)}]^{1/p} \leq  e^{p\theta(\sigma_A(p\theta) + \rho_A(p\theta)(t-s))(1/p)} = e^{\theta(\sigma_A(p\theta) + \rho_A(p\theta)(t-s))}$ and, similarly,\\
  $\E[e^{-q\theta S(s, t)}]^{1/q} \leq e^{\theta(\sigma_S(q\theta) - \rho_S(q\theta)(t-s))}$.

The following lemma contains the translation of an existing output bound from~\cite{BS12-1} to the combinatorial framework as well as backlog and delay bounds for the dependent case.
\begin{lemma}
	Consider a dynamic $S$-server traversed by a flow with bivariate process $A$ that is $(\sigma_A, \rho_A)$-constrained, and assume that the process $S$ is also $(\sigma_S, \rho_S)$-constrained. Then  
	for all $p, q$ such that $\frac{1}{p} + \frac{1}{q}= 1$ and $ \rho_A(p\theta) < \rho_S(q\theta) $,
	\begin{itemize}
		\item $F_D(\theta, z) = \frac{e^{\theta(\sigma_A(p\theta)+\sigma_S(q\theta))}}{1-e^{-\theta(\rho_S(q\theta) - \rho_A(p\theta))}} \cdot \frac{1}{1-e^{\theta \rho_A(p\theta)}z}$;
		\item $\p(q(t) \geq b)  \leq \frac{e^{\theta(\sigma_A(p\theta)+\sigma_S(q\theta))}}{1-e^{-\theta(\rho_S(q\theta) - \rho_A(p\theta))}} e^{-\theta b}$;
		\item  $\p(d(t)\geq T) \leq \frac{e^{\theta(\sigma_A(p\theta)+\sigma_S(q\theta) + \rho_A(p\theta))}}{1-e^{-\theta(\rho_S(q\theta) - \rho_A(p\theta))}} e^{-\theta\rho_S(q\theta)T}$.
	\end{itemize}
\end{lemma}

	\section{Tree Network Analysis under Complete Independence}
\label{sec:pmoo}

In this section, we derive a result enabling us to unleash the power of the PMOO principle for the SNC when all arrival and service processes are originally independent. In particular, this result is directly applicable to tree networks. However, given an overall network with servers and flows, it is only from the perspective of the flow of interest (foi) that the network has to constitute a tree. In other words, in a preliminary step, we (attempt to) reduce the network to a tree, with all associated arrival and service processes still being independent, by performing the following steps:

\begin{enumerate}
\item Terminate flows \textit{after} their last direct or indirect interaction with the foi
and reduce the network by deleting all flows with which the foi has neither a direct nor indirect dependency (also delete all servers that do not carry any flows any more).
\item Check for all remaining flows whether they directly or indirectly rejoin (a) the foi, or,  (b) each other.
\item If there is any rejoining flow in step (2) then the network is not tree-reducible, otherwise, it is and the remaining flows and servers form an in-tree whose root is the last server traversed by the foi.
\end{enumerate}

While not being perfectly formal, it should be clear from this procedure that tree-reducibility imposes clearly a restriction to general feedforward networks, simply speaking by prohibiting rejoining flows. Nevertheless, it covers quite a number of cases and generalizes significantly on previous work in SNC network analysis. In Section~\ref{sec:dependent}, we come back to the issue of general feedforward networks \textit{with} rejoining flows and how we can leverage from the PMOO result for tree networks in that case. 

Two elementary operations to analyse networks in network calculus are computing residual service offered to the flow of interest for a server crossed by multiple flows and the concatenation of servers when a flow traverses multiple servers. Let us recall the main results from the literature concerning MGF-based network calculus. 

\begin{theorem}[Residual server~\cite{Fidler06}]
	\label{thm:residual-ser}
	Consider a work-conserving $S$-server traversed by two flows with respective bivariate arrival processes $A_1$ and $A_2$. The residual server of flow 1 can be characterized by a dynamic $S_1$-server with 
	$$S_1(s, t) = [S(s, t) - A_2(s, t)]_+,~\forall s\leq t,$$
	where $[x]_+ = \max(0, x)$.
\end{theorem}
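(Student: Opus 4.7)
The plan is to verify the defining inequality of a dynamic $S_1$-server for flow 1, namely $D_1(0, t) \geq A_1 \conv S_1(0, t)$ for every $t$, where by definition $A_1 \conv S_1(0, t) = \min_{u \leq t}[A_1(0, u) + S_1(u, t)]$. Rather than minimising over all $u \leq t$, I would exhibit a single privileged index $u$ that already provides a sufficient lower bound, namely $u = start(t)$, the start of the aggregate backlogged period ending at $t$. The other required inequality $D_1(0, t) \leq A_1(0, t)$ is inherent to the per-flow arrival/departure bookkeeping and need not be argued here.

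Fix $t$ and set $s = start(t)$. Two ingredients drive the calculation. First, the aggregate emptiness at $s$, $A(0, s) = D(0, s)$, forces $A_i(0, s) = D_i(0, s)$ for both $i = 1, 2$, because per-flow backlogs are non-negative and sum to zero. Second, the work-conserving property gives $D(0, t) \geq D(0, s) + S(s, t)$. Writing $D_1(0, t) = D(0, t) - D_2(0, t)$, bounding $D_2(0, t) \leq A_2(0, t)$, and combining with additivity of $A_2$ together with $A_2(0, s) = D_2(0, s)$, a short rearrangement yields
\begin{equation*}
D_1(0, t) \geq A_1(0, s) + S(s, t) - A_2(s, t).
\end{equation*}
To recover the positive-part in the definition of $S_1$, I would combine this with the trivial monotonicity bound $D_1(0, t) \geq D_1(0, s) = A_1(0, s)$, which upgrades the right-hand side to $A_1(0, s) + \posPart{S(s, t) - A_2(s, t)} = A_1(0, s) + S_1(s, t)$. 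Since this is no smaller than $\min_{u \leq t}[A_1(0, u) + S_1(u, t)] = A_1 \conv S_1(0, t)$, the dynamic $S_1$-server property is established.

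The only delicate step, in my view, is identifying $s = start(t)$ as a "reset point" at which per-flow histories may be discarded; in particular, the passage from aggregate emptiness to the per-flow equalities $A_i(0, s) = D_i(0, s)$ relies on the non-negativity of individual backlogs and must not be conflated with a FIFO assumption on the aggregate. The work-conserving hypothesis is likewise essential, since a merely dynamic $S$-server would not yield $D(0, t) - D(0, s) \geq S(s, t)$ on $[s, t]$. Once these two observations are in hand, the remainder is elementary arithmetic on bivariate increments, and the fact that the conclusion only asserts a dynamic (not work-conserving) $S_1$-server is consistent with the construction, as the positive-part truncation can make $S_1$ degenerate on intervals where $A_2$ dominates $S$.
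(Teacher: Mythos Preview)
The paper does not give its own proof of this theorem; it is quoted from~\cite{Fidler06} without argument. Your proof is correct and, in fact, mirrors exactly the technique the paper deploys in the proof of its generalization, Theorem~\ref{th:e2e-sc}: choose $s = start(t)$, use aggregate emptiness at $s$ to equate per-flow arrivals and departures there, apply the work-conserving inequality $D(0,t) \geq D(0,s) + S(s,t)$, bound the cross-flow departure by its arrival, and finally recover the positive part via $D_1(0,t) \geq D_1(0,s) = A_1(0,s)$. Your remarks on why work-conservation (rather than the weaker dynamic-server property) is needed, and why the conclusion only yields a dynamic $S_1$-server, are accurate and consistent with the paper's treatment.
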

Here, we make no assumption about the service policy regarding the sharing between flows. 
This is also known as arbitrary or blind multiplexing.

\begin{theorem}[End-to-end server~\cite{Fidler06}]
	Consider a flow traversing a tandem of dynamic $S_i$-servers, $i \in \{1, \dots, n\}.$
The overall service offered is a dynamic $S_{\mathrm{e2e}}$-server with
$S_{\mathrm{e2e}} = S_1\conv \cdots \conv S_n$. 
\end{theorem}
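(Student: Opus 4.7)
The plan is to prove the statement by induction on $n$, where the base case $n = 2$ contains all the substance and the inductive step follows by associativity of the $(\min,\plus)$-convolution. Thus I would first set up the two-server situation: let $A_0$ denote the arrival process at server~1, $A_1$ the departure from server~1 (equivalently, the arrival at server~2) and $A_2$ the departure from server~2. The goal is to show that for every $t \geq 0$,
\begin{equation*}
A_0(0,t) \geq A_2(0,t) \geq A_0 \conv (S_1 \conv S_2)(0,t).
\end{equation*}

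Next I would produce the left inequality from the definition of a dynamic $S$-server, which enforces $A_{i-1}(0,t) \geq A_i(0,t)$ at each stage, so that a simple chaining gives $A_0(0,t) \geq A_1(0,t) \geq A_2(0,t)$. The right inequality is the core step: I apply the $S_2$-server property to get $A_2(0,t) \geq A_1 \conv S_2(0,t) = \min_{0 \leq u \leq t}\{A_1(0,u) + S_2(u,t)\}$, then substitute the $S_1$-server bound $A_1(0,u) \geq A_0 \conv S_1(0,u) = \min_{0 \leq v \leq u}\{A_0(0,v) + S_1(v,u)\}$ inside the minimum. After swapping the two nested minima and grouping the terms involving $u$, the expression collapses to $\min_{0 \leq v \leq t}\{A_0(0,v) + (S_1 \conv S_2)(v,t)\} = A_0 \conv (S_1 \conv S_2)(0,t)$, which is exactly the desired bound.

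With the base case established, the inductive step is routine: assuming the tandem of the first $n-1$ servers offers the dynamic service $S_1 \conv \cdots \conv S_{n-1}$, I treat the cascade of this virtual server and $S_n$ as a two-server tandem and invoke the base case to obtain $(S_1 \conv \cdots \conv S_{n-1}) \conv S_n = S_1 \conv \cdots \conv S_n$, where the final equality uses associativity of the $(\min,\plus)$-convolution. No probabilistic argument is needed, since the statement is purely a sample-path property of deterministic dynamic servers; in particular, no assumption on the joint distribution of the $S_i$'s enters the proof.

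I do not anticipate a serious obstacle: the only delicate point is the interchange of the two minima, which is justified because the inner minimization range $\{v : 0 \leq v \leq u\}$ combined with the outer range $\{u : 0 \leq u \leq t\}$ describes exactly the triangular set $\{(v,u) : 0 \leq v \leq u \leq t\}$, and the resulting partial minimum over $u \in [v,t]$ is by definition $(S_1 \conv S_2)(v,t)$. Care is only required to keep the definition domains consistent with the convention $f(t,t) = 0$ and the non-negativity assumption on bivariate functions stated in Subsection~\ref{sec:mgf-snc}; both are preserved by $\conv$, so the resulting $S_{\text{e2e}}$ is admissible as a dynamic server characterization.
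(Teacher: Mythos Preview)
Your argument is correct and is the standard proof of this classical concatenation result; however, note that the paper does not actually supply its own proof of this statement---it is quoted as a preliminary result from \cite{Fidler06} and left unproved. There is therefore nothing in the paper to compare against, but your induction on $n$ with the two-server base case and the interchange of nested minima is exactly the textbook derivation and goes through without issue.
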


In this section, we generalize and combine these two central theorems for tree networks. 
We first bound the service process for the end-to-end dynamic server offered to a flow traversing a network in Subsection~\ref{sec:biv-pmoo}. Next, in Subsection~\ref{sec:series-pmoo} we derive the corresponding bounding generating function and apply the framework from Subsection~\ref{sec:snc-comb} to derive performance bounds.

%

\vspace{-1mm}

\subsection{A Pay-Multiplexing-Only-Once formula for bivariate processes}	
\label{sec:biv-pmoo}
In this section, we consider a tree network. 
More precisely, the network can be described as follows. 

\begin{enumerate}
	\item The network is composed of $n$ servers, numbered from 1 to $n$ according to the topological order:  if server $h$ is the successor of server $j$, then $h>j$. The topology is a tree directed to server $n$: each server $j\neq n$ has a unique successor that we denote by  $j^\bu$. Each server $j$ is work-conserving and offers the service $S_j(s, t)$ during the time slots $s+1, \ldots, t$. By convention, we write $n^{\bu} = n+1$. 
	\item There are $m$ flows in the network, numbered from 1 to $m$. 
	We denote $\pi_i = \langle \pi_i(1), \ldots, \pi_i(\ell_i)\rangle$ the path, i.e. the sequence of servers, of length $\ell_i\in\N\setminus\{0\}$ followed by flow $i$ 
	and $A_i(s, t)$ is the amount traffic of flow $i$ arriving in the network during the time slots $s+1, \ldots, t$. We write $i\in j$ if flow $i$ crosses server $j$ (or equivalently $j\in\pi_i$).
	\item We denote $A^{(j)}_i(s, t)$ the bivariate process of flow $i$ at the input of server $j$, and consequently, $A^{(j^{\bu})}_i(s, t)$ is the bivariate process at the output of server $j$.  Note that we have $A_i = A^{(\pi_i(1))}_i$.
\end{enumerate}

\begin{example}
	The networks of Figures~\ref{fig:toy} and~\ref{fig:tree} are two examples of tree networks. They both have three flows and three servers. In the network of Figure~\ref{fig:toy}, we have $1^{\bu} = 2$, $2^{\bu} = 3$ and $3^{\bu} = 4$, whereas in the network of Figure~\ref{fig:tree}, we have $1^{\bu} = 2^{\bu} = 3$ and $3^{\bu} = 4$. 
	Graphically, the relation $j^{\bullet} = h$ means that $(j, h)$ is an edge of the network.
\end{example}

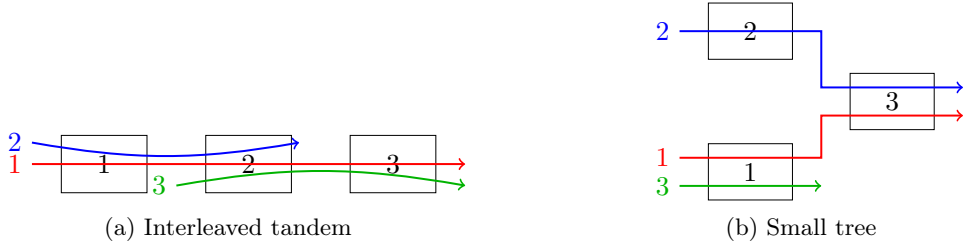
\begin{figure*}[t]		
	\centering
	\subfloat[Interleaved tandem \label{fig:toy}]{
		\resizebox{0.38\textwidth}{!}{%
		\begin{tikzpicture}[server/.style={shape=rectangle,draw,minimum height=.8cm,inner xsep=3ex}]
			\node[server,name=S1] at (0,0) {$1$};
			\node[server,name=S2] at (2,0) {$2$};
			\node[server,name=S3] at (4,0) {$3$};
			\draw[->, thick, red] (-1,0) node[left] {$1$} -- (5,0);
			\draw[->, thick, blue] (-1,0.3) node[left] {$2$} to[out=-10,in=-170] (2.7,.3);
			\draw[->, thick, green!70!black]  (1,-0.3) node[left] {$3$} to[out=10,in=170] (5,-.3);
		\end{tikzpicture}
	}%
	}
	\hspace{20mm}
	\subfloat[Small tree \label{fig:tree}]{
		\resizebox{0.26\textwidth}{!}{%
		\begin{tikzpicture}[server/.style={shape=rectangle,draw,minimum height=.8cm,inner xsep=3ex}]
			\node[server,name=S1] at (0,0) {$3$};
			\node[server,name=S2] at (-2,-1) {$1$};
			\node[server,name=S3] at (-2,1) {$2$};
			\draw[->, thick, red] (-3,-0.8) node[left] {$1$} -- (-1,-0.8) -- (-1, -0.2) -- (1, -0.2);
			\draw[->, thick, blue] (-3,1) node[left] {$2$} -- (-1,1) -- (-1, 0.2) -- (1, 0.2);
			\draw[->, thick, green!70!black]  (-3,-1.2) node[left] {$3$}  -- (-1,-1.2);
		\end{tikzpicture}
	}%
	}
	\vspace{-2mm}
	\caption{Two example networks.}
	\vspace{-3mm}
\end{figure*}

	

We can now state the main result of this section. Theorem~\ref{th:e2e-sc} also generalizes that of~\cite{BGLT08, SZ06-1, SZM08-1} in deterministic network calculus from tandems to trees. 
More precisely, all the results in those references assume that the flow for which we compute the residual service curve traverses the entire tandem network. Here we relax this assumption. However, the principle of the proof remains the same, and our result is directly applicable to deterministic network calculus. For the sake of self-containedness, we give a proof in Appendix~\ref{app:simple-pmoo}.

\begin{theorem}
	\label{th:e2e-sc}
	With the notations above, assume that the last server traversed by flow 1 is $n$. The residual service offered to flow 1 is a dynamic $S_{\mathrm{e2e}}$-server with $\forall 0\leq   t_{\pi_1(1)} \leq t_{n+1}$,  
%
		$$S_{\mathrm{e2e}}(t_{\pi_1(1)}, t_{n+1}) =  \left[\inf_{\forall j,~t_j \leq t_{j^\bu}} \sum_{j=1}^n [S_j(t_j, t_{j^\bu})] - \sum_{i = 2}^m A_i(t_{\pi_i(1)}, t_{\pi_i(\ell_i)^\bu})\right]_+.$$
\end{theorem}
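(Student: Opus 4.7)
The plan is to generalise the sample-path argument used for the deterministic PMOO formula on tandems \cite{BGLT08, SZ06-1, SZM08-1} to the tree setting, via a recursive start-of-backlogged-period construction, a per-flow telescoping identity, and a separate argument exploiting server emptiness along $\pi_1$ to take care of the positive-part bracket. Fix $T := t_{n+1}$ and, in reverse topological order $j = n, n-1, \ldots, 1$, define $t_j := \sup\{s \leq t_{j^\bu} \,:\, \text{server } j \text{ is empty at time } s\}$. This is well-defined because $j^\bu > j$ implies that $t_{j^\bu}$ has already been chosen, and by construction $t_j \leq t_{j^\bu}$, so $(t_j)_{j=1}^n$ is a feasible point in the infimum defining $S_{e2e}$. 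Set $s^* := t_{\pi_1(1)}$ and let $I := \sum_{j=1}^n S_j(t_j, t_{j^\bu}) - \sum_{i=2}^m A_i(t_{\pi_i(1)}, t_{\pi_i(\ell_i)^\bu})$ denote the inner expression at this choice. It suffices to prove $D_1(0, T) \geq A_1(0, s^*) + [I]_+$: since the infimum defining $S_{e2e}(s^*, T)$ is bounded above by $I$, this yields $D_1(0, T) \geq A_1(0, s^*) + S_{e2e}(s^*, T) \geq (A_1 \conv S_{e2e})(0, T)$, witnessing the outer infimum at $s = s^*$ and delivering the claimed dynamic $S_{e2e}$-server property for flow 1.

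For the main inequality $D_1(0, T) - A_1(0, s^*) \geq I$, work-conservation at each server $j$ gives $\sum_{i \in j} F_i^{(j^\bu)}(0, t_{j^\bu}) \geq \sum_{i \in j} F_i^{(j)}(0, t_j) + S_j(t_j, t_{j^\bu})$: since $j$ is empty at $t_j$ the total input equals the total output at that time, and at least $S_j(t_j, t_{j^\bu})$ additional work is produced on $[t_j, t_{j^\bu}]$. Summing over $j = 1, \ldots, n$ and exchanging the order of summation rewrites the left-hand side as $\sum_{i=1}^m \sum_{j \in \pi_i} \bigl[F_i^{(j^\bu)}(0, t_{j^\bu}) - F_i^{(j)}(0, t_j)\bigr]$. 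Since each $\pi_i$ is a contiguous chain in the tree and $F_i^{(j^\bu)}$ is the input process of flow $i$ at the next server on $\pi_i$, the inner sum telescopes to $F_i^{(\pi_i(\ell_i)^\bu)}(0, t_{\pi_i(\ell_i)^\bu}) - A_i(0, t_{\pi_i(1)})$. Isolating $i = 1$ (whose telescoped value equals exactly $D_1(0, T) - A_1(0, s^*)$) and upper-bounding each cross-flow contribution by $A_i(t_{\pi_i(1)}, t_{\pi_i(\ell_i)^\bu})$ via the causality bound $F_i^{(\pi_i(\ell_i)^\bu)}(0, t) \leq A_i(0, t)$ yields the desired inequality.

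It remains to secure the $[\cdot]_+$, that is, to also show $D_1(0, T) \geq A_1(0, s^*)$. For every $j \in \pi_1$, emptiness at $t_j$ implies $F_1^{(j^\bu)}(0, t_j) = F_1^{(j)}(0, t_j)$, and monotonicity in time gives $F_1^{(j^\bu)}(0, t_{j^\bu}) \geq F_1^{(j^\bu)}(0, t_j)$. Alternating these along $\pi_1$, starting from $A_1(0, s^*) = F_1^{(\pi_1(1))}(0, s^*)$ and ending with $F_1^{(n+1)}(0, t_n) \leq F_1^{(n+1)}(0, T) = D_1(0, T)$, produces the required inequality. The main subtlety of the proof is that the formula sums $S_j$ over \emph{all} servers $j \in \{1, \ldots, n\}$, not only those on $\pi_1$: at a server $j \notin \pi_1$ the work-conservation inequality involves only cross-flow terms, yet the per-flow telescoping still absorbs them consistently, which is precisely what allows the bound to credit cross-flows for service received in upstream branches of the tree before they reach flow 1's path --- the essential new feature of PMOO on trees beyond the tandem case of \cite{BGLT08, SZ06-1, SZM08-1}.
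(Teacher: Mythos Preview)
Your proof is correct and follows essentially the same approach as the paper's: recursively set each $t_j$ to the start of the backlogged period of $t_{j^\bu}$ at server $j$, use work-conservation together with per-flow emptiness at $t_j$ to obtain the server-wise inequality, sum over all servers and telescope along each flow's path, bound the cross-flow departures by their arrivals, and handle the $[\cdot]_+$ via the monotone chain of $F_1^{(\cdot)}$ values along $\pi_1$. Your write-up is somewhat more explicit than the paper's about the logical role of the constructed $(t_j)_j$ as a witness both for the infimum defining $S_{e2e}(s^*,T)$ and for the outer convolution at $s=s^*$, but this is a matter of exposition rather than a different argument.
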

Note that there is a parameter $ t_j $ for each server $ j $ following the topological order of the servers.

\begin{example}
	The end-to-end server for flow 1 for the network in Figure~\ref{fig:toy} is 
	$$S_{\mathrm{e2e}}(t_1, t_4) = \left[\inf_{t_1 \leq t_2\leq t_3\leq t_4} (S_1(t_1, t_2) + S_2(t_2, t_3) + S_3(t_3, t_4) - A_2(t_1, t_3) - A_3(t_2, t_4))\right]_+.$$ 
	In this formula, the infimum is computed at all times $t_2$ and $t_3$ satisfying the conditions $t_1 \leq t_2\leq t_3 \leq t_4$ ($t_1$ and $t_4$ are fixed), according to the network's topology.
	
	The end-to-end server for flow 1 for the network in Figure~\ref{fig:tree} is 
	$$S_{\mathrm{e2e}}(t_1, t_4) = \left[\inf_{\substack{t_1 \leq t_3\leq t_4,\\t_2\leq t_3}} (S_1(t_1, t_3) + S_2(t_2, t_3) + S_3(t_3, t_4) - A_2(t_2, t_4) - A_3(t_1, t_3))\right]_+.$$
	Here again, the infimum is computed for all $t_2$ and $t_3$, but the conditions change with the network topology: no lower bound is enforced for $t_2$ as server 2 has no predecessor, that is still upper bounded by $t_3$ and the new bounds for $t_3$ become $t_1\leq t_3\leq t_4$, as $(1, 3)$ is an edge of the network. 
\end{example}
\begin{proof}[Proof of Theorem~\ref{th:e2e-sc}]
	Consider server $j$. For all $t_j \leq t_{j^{\bu}}$ in the same backlogged period,\\ 
	$\sum_{i \in  j} [A^{(j^\bu)}_i(0, t_{j^\bu}) - A^{(j^\bu)}_i(0, t_j)] \geq S_j(t_j, t_{j^\bu}).$ 
	In particular, this formula is true when $t_j = start_j(t_{j^\bu})$, the start of the backlogged period of $t_{j^{\bu}}$ at server $j$. In that case  $A^{(j^\bu)}_i(0, t_j) =A^{(j)}_i(0, t_j)$, and
	$$\sum_{i \in  j} [A^{(j^\bu)}_i(0, t_{j^\bu}) - A^{(j)}_i(0, t_j)] \geq S_j(t_j, t_{j^\bu}).$$
	Summing over all servers $j$, we obtain 
	$$\sum_{j=1}^n \big(\sum_{i \in  j} [A^{(j^\bu)}_i(0, t_{j^\bu}) - A^{(j)}_i(0, t_j)]\big) \geq \sum_{j=1}^n S_j(t_j, t_{j^\bu}).$$
	Now, by exchanging the two sums on the left-hand side, most of the terms cancel, and (remind that $A_i^{(\pi_i(1))} = A_i$)
	$$\sum_{i=1}^m [A^{(\pi_i(\ell_i)^{\bu})}_i(0, t_{\pi_i(\ell_i)^{\bu}}) - A_i(0, t_{\pi_i(1)})] \geq \sum_{j=1}^n S_j(t_j, t_{j^\bu}).$$
	Keeping only $A_1^{(\pi_i(\ell_i)^{\bu})}(0, t_{\pi_i(\ell_i)^{\bu}})$ on the left-hand side of the inequality and using $A_i^{(\pi_i(\ell_i)^{\bu})}(0, \cdot) \leq A_i(0, \cdot)$, we obtain
	\begin{align*}
		A^{(\pi_1(\ell_1)^\bu)}_1(0, t_{\pi_1(\ell_1)^\bu}) & \geq A_1(0, t_{\pi_1(1)}) +  \sum_{j=1}^n S_j(t_j, t_{j^\bu}) - \sum_{i=2}^m [A^{(\pi_i(\ell_i)^{\bu})}_i(0, t_{\pi_i(\ell_i)^{\bu}}) - A_i(0, t_{\pi_i(1)})] \\ 
		& \geq A_1(0, t_{\pi_1(1)}) +  \sum_{j=1}^n S_j(t_j, t_{j^\bu}) - \sum_{i=2}^m A_i(t_{\pi_i(1)}, t_{\pi_i(\ell_i)^{\bu}}),
	\end{align*}
	We also have $\forall j\in\pi_1$, $A_1^{(j)}(0, t_{j}) = A_1^{(j^{\bu})}(0, t_{j}) \leq A_1^{(j^{\bu})}(0, t_{j^{\bu}})$. So $A^{(\pi_1(\ell)^\bu)}_1(0, t_{\pi_1(\ell_1)^\bu}) \geq A_1(0, t_{\pi_1(1)})$. 

	The final result follows by taking the minimum on all possible values of $t_j$, $j\notin \{\pi_1(1), \pi_1(\ell_1)^{\bu}\}$, and noticing that $\pi_1(\ell_1)^\bu = n+1$.
\end{proof}

This result is valid for any bivariate arrival and service processes.
The next subsection demonstrates its use in the analytic combinatorics framework. 
Before that, let us briefly discuss the power of Theorem~\ref{th:e2e-sc}: 
Looking at the end-to-end server calculation, it can be noted that all arrival and service processes appear in the formula as they originally enter the network. 
We point out that the tree generalization is crucial here, because it avoids the need for characterizing internal flows that may have become dependent after sharing a server. 
Hence, this calculation introduces \textit{no method-pertinent dependencies}
at all. 
Moreover,  the end-to-end server calculation is performed in one big step, in contrast to a sequential application of network calculus operations. 
This is already known to \textit{avoid a sequencing penalty} \cite{Beck16},
 yet completely avoiding it by performing all operations \textit{simultaneously} is only possible now with Theorem~\ref{th:e2e-sc}.

\vspace{-2mm}

\subsection{Bounding generating function of the end-to-end server}
\label{sec:series-pmoo}
	In this subsection, we assume that 
	\begin{itemize}
		\item[$(H_1)$] all processes and servers are mutually independent;
		\item[$(H_2)$] flow $A_i$ is $(\sigma_{A_i}, \rho_{A_i})$-constrained for all flow $i\in\{1, \ldots,  m\}$ and $S_j$ is $(\sigma_{S_j}, \rho_{S_j})$-constrained for all server $j \in\{1, \ldots, n\}$. 
	\end{itemize}
	We can now derive a service bounding generating function for the end-to-end dynamic server. 
	
	\begin{theorem}
		\label{th:e2e-gs-indep}
		Under hypothesis $(H_1)$ and $(H_2)$, the end-to-end service for flow 1 is bounded by the service bounding generating function
		$$F_{S_{\mathrm{e2e}}}(\theta, z) =  \frac{e^{\theta(\sum_{i=2}^m\sigma_{A_i}(\theta) + \sum_{j=1}^n \sigma_{S_j}(\theta))}}{ \prod_{j\notin \pi_1} (1- e^{ - \theta(\rho_{S_j}(\theta) - \sum_{i\in j} \rho_{A_i}(\theta))})} \prod_{j\in\pi_1} \frac{1}{1-e^{ - \theta(\rho_{S_j}(\theta)- \sum_{1\neq i\in j} \rho_{A_i}(\theta))}z}.$$
	\end{theorem}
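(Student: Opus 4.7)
The plan is to chain together five ingredients: Theorem~\ref{th:e2e-sc} to get an explicit end-to-end server, a union bound to eliminate the infimum, full independence to factorize the joint MGF, the $(\sigma,\rho)$-constraints to bound each individual MGF, and the analytic-combinatorics dictionary of Subsection~\ref{sec:snc-comb} to repackage the resulting constrained convolution sum as a generating-function coefficient in $z$.

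Concretely, first I would apply Theorem~\ref{th:e2e-sc} to write $S_{e2e}(s,t) = [Y]_+$ with $Y$ the infimum over the $(t_j)$'s of $\sum_j S_j(t_j, t_{j^\bu}) - \sum_{i=2}^m A_i(t_{\pi_i(1)}, t_{\pi_i(\ell_i)^\bu})$, at fixed endpoints $t_{\pi_1(1)} = s$ and $t_{n+1} = t$. Since $[Y]_+ \geq Y$, we have $e^{-\theta S_{e2e}} \leq e^{-\theta Y}$; rewriting $-Y$ as a supremum and using $\sup \leq \sum$ of non-negative exponentials yields
$$\E[e^{-\theta S_{e2e}(s,t)}] \leq \sum_{t_j \leq t_{j^\bu}} \E\Bigl[e^{\theta \sum_{i=2}^m A_i(t_{\pi_i(1)}, t_{\pi_i(\ell_i)^\bu}) - \theta \sum_j S_j(t_j, t_{j^\bu})}\Bigr].$$
Mutual independence then factorizes each summand into a product of one-variable MGFs, each bounded via its $(\sigma,\rho)$-constraint, extracting a single $\sigma$-constant $e^{\theta(\sum_{i\geq 2}\sigma_{A_i} + \sum_j \sigma_{S_j})}$ times an exponential that is linear in the increments $d_j := t_{j^\bu} - t_j$.

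The crucial bookkeeping step is a telescoping identity: since each flow $i$'s path $\pi_i$ is a chain of consecutive servers ($\pi_i(k+1) = \pi_i(k)^\bu$), one has $t_{\pi_i(\ell_i)^\bu} - t_{\pi_i(1)} = \sum_{j \in \pi_i} d_j$, so swapping summation orders gives
$$\sum_{i=2}^m \rho_{A_i}(\theta)\,(t_{\pi_i(\ell_i)^\bu} - t_{\pi_i(1)}) = \sum_{j=1}^n d_j \sum_{i \in j,\, i \neq 1} \rho_{A_i}(\theta),$$
so the exponent collapses to $-\theta \sum_j R_j d_j$ with $R_j := \rho_{S_j}(\theta) - \sum_{i \in j,\, i \neq 1} \rho_{A_i}(\theta)$. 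Changing summation variables from $(t_j)$ to $(d_j)$, the fixed endpoints on $\pi_1$ impose only the single constraint $\sum_{j \in \pi_1} d_j = t - s$, while the $d_j$ for $j \notin \pi_1$ decouple and range freely over $\N$.

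Finally, the sum splits: the unconstrained side branches contribute $\prod_{j \notin \pi_1}(1 - e^{-\theta R_j})^{-1}$, while the sum over $(d_j)_{j \in \pi_1}$ with fixed total $t-s$ is, by definition of the Cauchy product (Subsection~\ref{sec:comb}), the $(t-s)$-th coefficient of $\prod_{j \in \pi_1}(1 - e^{-\theta R_j} z)^{-1}$. Pulling the $\sigma$-constant and the side-branch factor through the $[z^{t-s}]$ operator delivers $F_{S_{e2e}}(\theta,z)$ in the stated form. I expect the main obstacle to be the bookkeeping in the telescoping / change-of-variables step: one must verify that rooting the in-tree at $n+1$ genuinely leaves $\sum_{j \in \pi_1} d_j = t-s$ as the \emph{only} coupling between the $d_j$'s, and that extending each side-branch $d_j$ to all of $\N$ only loosens the bound. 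Both checks are where the tree-reducibility hypothesis and the full-independence assumption actually get consumed.
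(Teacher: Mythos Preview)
Your proposal is correct and follows essentially the same route as the paper's proof: union bound on the infimum from Theorem~\ref{th:e2e-sc}, independence to factorize, $(\sigma,\rho)$-constraints, the change of variables $u_j = t_{j^\bu}-t_j$ with the telescoping $t_{\pi_i(\ell_i)^\bu}-t_{\pi_i(1)}=\sum_{j\in\pi_i}u_j$, and then separating the unconstrained side-branch sums from the Cauchy product on $\pi_1$. Your explicit articulation of why the only surviving coupling is $\sum_{j\in\pi_1}d_j=t-s$, and why extending the off-path $d_j$ to all of $\N$ merely loosens the bound, is exactly the content of the paper's passage from the third to the sixth line of its display.
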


\begin{proof}
	From Theorem~\ref{th:e2e-sc}, we can derive a bound of the Laplace transform of the service for flow 1. For the sake of concision, in the last line, 
	 we omit the function argument $ \theta $. 
	\begin{align}
		\E[e^{-\theta S_{\mathrm{e2e}}(t_{\pi_1(1)}, t_{n+1})}] &\leq \sum_{\forall j,~t_j\leq t_{j^\bu}} \E[e^{ -\theta(\sum_{j=1}^n S_j(t_j, t_{j^\bu}) - \sum_{i=2}^m A_i(t_{\pi_i(1)}, t_{\pi_i(\ell_i)^\bu}))}] \notag\\
		\label{eq:indep}
	&= \sum_{\forall j,~t_j\leq t_{j^\bu}} \prod_{j=1}^n\E[e^{ -\theta S_j(t_j, t_{j^\bu})}]  \prod_{i=2}^m\E[e^{ \theta A_i(t_{\pi_i(1)}, t_{\pi_i(\ell_i)^\bu})}]\\
		&\hspace{-3cm}\leq \sum_{\forall j,~t_j\leq t_{j^\bu}} \prod_{j=1}^n e^{\theta(\sigma_{S_j}(\theta) - \rho_{S_j}(\theta)(t_j^\bu - t_j))}\prod_{i=2}^m e^{\theta(\sigma_{A_i}(\theta) + \rho_{A_i}(\theta)(t_{\pi_i(\ell_i)^\bu} - t_{\pi_i(1)}))} \notag\\
	&	\hspace{-3cm}\leq \sum_{\substack{\forall j,~u_j \geq 0,\\ \sum_{j\in\pi_1} u_j= t_{n+1} - t_{\pi_1(1)}}}\prod_{j=1}^n e^{\theta(\sigma_{S_j}(\theta) - \rho_{S_j}(\theta)u_j)}\prod_{i=2}^m e^{\theta(\sigma_{A_i}(\theta) + \rho_{A_i}(\theta)(\sum_{j\in\pi_i} u_j))} \notag\\
	&	\hspace{-3cm}= e^{\theta(\sum_{i=2}^m\sigma_{A_i}(\theta) + \sum_{j=1}^n \sigma_{S_j}(\theta))} \sum_{\substack{\forall j,~u_j \geq 0,\\ \sum_{j\in\pi_1} u_j= t_{n+1} - t_{\pi_1(1)}}}\prod_{j=1}^n e^{ - \theta(\rho_{S_j}(\theta) - \sum_{1\neq i\in j} \rho_{A_i}(\theta))u_j} \notag\\
	& \hspace{-3cm} = \frac{e^{\theta(\sum_{i=2}^m\sigma_{A_i} + \sum_{j=1}^n \sigma_{S_j})}}{ \prod_{j\notin \pi_1} \big(1- e^{ - \theta(\rho_{S_j} - \sum_{i\in j} \rho_{A_i})}\big)}\sum_{\sum_{j\in\pi_1} u_j = t_{n+1} - t_{\pi_1(1)}} \prod_{j\in\pi_1} e^{ - \theta(\rho_{S_j}- \sum_{1\neq i\in j} \rho_{A_i})u_j}. \notag
	\end{align}
	In Equation~\eqref{eq:indep}, we use the independence of the arrivals and service; in the third line, the constraints on the processes; in the fourth line, we changed the variables: $u_j = t_j^\bu - t_j$ (hence $t_{\pi_i(\ell_i)^\bu } - t_{\pi_i(1)} = \sum_{j\in\pi_i} u_j$), 
	 and in the sixth line summed the terms not constrained by $\sum_{j\in\pi_1} u_j = t_{n+1} - t_{\pi_1(1)}$.

	We can recognize the terms of the Cauchy product of geometric series, hence the service is bounded by the generating function
	\begin{align*}
		F_{S_{\mathrm{e2e}}}(\theta, z) & =  \frac{e^{\theta(\sum_{i=2}^m\sigma_{A_i}(\theta) + \sum_{j=1}^n \sigma_{S_j}(\theta))}}{ \prod_{j\notin \pi_1} \big(1- e^{ - \theta(\rho_{S_j}(\theta) - \sum_{i\in j} \rho_{A_i}(\theta))}\big)} \prod_{j\in\pi_1} \frac{1}{1-e^{ - \theta(\rho_{S_j}(\theta)- \sum_{1\neq i\in j} \rho_{A_i}(\theta))}z}.
	\end{align*}
\end{proof}



\begin{example}
	A service bounding generating function for the end-to-end server for flow 1 in the network of Figure~\ref{fig:toy} is 
 $$F_{e2e}(\theta, z) = \frac{e^{\theta(\sigma_{A_2} + \sigma_{A_3} + \sigma_{S_1} + \sigma_{S_2} + \sigma_{S_3})} }{(1-e^{-\theta(\rho_{S_1} - \rho_{A_2})}z)(1-e^{-\theta(\rho_{S_2} - \rho_{A_2} - \rho_{A_3})}z)(1-e^{-\theta(\rho_{S_3} - \rho_{A_3})}z)}.$$




A service bounding generating function for the end-to-end server for flow 1 in the network of Figure~\ref{fig:tree} is 
$$F_{e2e}(\theta, z) = \frac{e^{\theta(\sigma_{A_2} + \sigma_{A_3} + \sigma_{S_1} + \sigma_{S_2} + \sigma_{S_3})}}{1-e^{-\theta(\rho_{S_2} - \rho_{A_2})}} \frac{1}{1-e^{-\theta(\rho_{S_1} - \rho_{A_3})}z} \frac{1}{1-e^{-\theta(\rho_{S_3} - \rho_{A_2})}z}.$$
Note the slight difference between the two formulas: informally, there is a '$z$' variable only for factors corresponding to the servers crossed by flow 1. 
\end{example}

The end-to-end service process is not necessarily $(\sigma,\rho)$-constrained. 
Corollary~\ref{cor:simple-pmoo} below gives a simpler  bounding generating function, a proof is given in Appendix~\ref{app:simple-pmoo}. 
In short, $G(\theta, z)$, defined in the statement of Corollary~\ref{cor:simple-pmoo}, is a bounding generating function of $F_{e2e}(\theta, z)$ obtained by bounding its factors not related to the dominant singularity.

 Let us first alleviate the notations and denote $$F_{S_{\mathrm{e2e}}}(\theta, z) = e^{\theta\sigma_{S_{\mathrm{e2e}}}(\theta)} \prod_{j\in\pi_1} \frac{1}{1-e^{-\theta \rho'_j(\theta)}z},$$
with 
$$
	e^{\theta\sigma_{S_{\mathrm{e2e}}}(\theta)} = \frac{e^{\theta(\sum_{i\neq 1}\sigma_{A_i}(\theta) + \sum_j \sigma_{S_j}(\theta))}}{ \prod_{j\notin \pi_1} \big(1- e^{ - \theta(\rho_{S_j}(\theta) - \sum_{i\in j} \rho_{A_i}(\theta))}\big)} \quad \text{ and } \quad 
	\rho'_j(\theta)  = \rho_{S_j}(\theta)- \sum_{1\neq i\in j} \rho_{A_i}(\theta),$$ 
%
and assume (without loss of generality, by renumbering the servers) that $\rho'_j(\theta) = \rho'_1(\theta)$ for all $j\leq k$ and  $\rho'_j(\theta) > \rho'_1(\theta)$ for all $j>k$. 

\begin{corollary}
	\label{cor:simple-pmoo}
	$G(\theta, z) = \frac{e^{\theta \sigma_{S_{\mathrm{e2e}}}(\theta)}}{\prod_{j=k+1}^m \big(1-e^{-\theta(\rho'_j(\theta) - \rho'_1(\theta))}\big)} \left( \frac{1}{1-e^{-\theta \rho'_1(\theta)}z}\right)^k$ is a bounding generating function of $F_{S_{\mathrm{e2e}}}(\theta, z)$.
\end{corollary}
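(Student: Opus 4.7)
Since a bounding generating function is established by coefficient-wise domination, the goal is to prove $[z^n]F_{S_{e2e}}(\theta,z) \leq [z^n]G(\theta,z)$ for every $n\geq 0$. Abbreviating $a_j := e^{-\theta\rho'_j(\theta)}$, the common factor $e^{\theta\sigma_{S_{e2e}}(\theta)}$ cancels on both sides and the inequality reduces to
\[
[z^n]\prod_{j\in\pi_1}\frac{1}{1-a_j z} \;\leq\; \Big(\prod_{j>k}\frac{1}{1-a_j/a_1}\Big)\binom{n+k-1}{k-1}\,a_1^n,
\]
where, after the renumbering stated in the excerpt, $a_j=a_1$ for $j\leq k$ and $0<a_j<a_1$ for $j>k$, with $a_1<1$ by stability. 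The right-hand side is precisely $[z^n]\bigl(\prod_{j>k}(1-a_j/a_1)^{-1}\bigr)(1-a_1 z)^{-k}$, the non-trivial part of $G(\theta,z)$.

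\textbf{Induction on the number of excess factors.} The plan is induction on the count $r$ of factors with $j>k$ already included. Define
\[
H_r(z) := \frac{1}{(1-a_1 z)^k}\prod_{j=k+1}^{k+r}\frac{1}{1-a_j z}.
\]
The base case $r=0$ is the standard identity $[z^n]H_0(z)=\binom{n+k-1}{k-1}a_1^n$, giving equality. For the inductive step, writing $H_{r+1}(z)=H_r(z)/(1-a_{k+r+1}z)$ and expanding the Cauchy product yields
\[
[z^n]H_{r+1}(z) = \sum_{i=0}^{n} [z^i]H_r(z)\cdot a_{k+r+1}^{\,n-i} \;\leq\; C_r\sum_{i=0}^{n}\binom{i+k-1}{k-1}a_1^{i}a_{k+r+1}^{\,n-i},
\]
where $C_r:=\prod_{j=k+1}^{k+r}(1-a_j/a_1)^{-1}$ and the inequality is the induction hypothesis applied term-by-term.

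\textbf{Closing the induction.} The key monotonicity is that $\binom{i+k-1}{k-1}$ is non-decreasing in $i$, so it can be replaced by $\binom{n+k-1}{k-1}$ and pulled out of the sum, leaving a geometric sum in $j=n-i$ whose ratio $a_{k+r+1}/a_1$ is strictly less than $1$ (since $\rho'_{k+r+1}>\rho'_1$). Hence
\[
[z^n]H_{r+1}(z)\leq C_r\binom{n+k-1}{k-1}a_1^{n}\sum_{j=0}^{n}\Big(\tfrac{a_{k+r+1}}{a_1}\Big)^{\!j} \leq \frac{C_r\binom{n+k-1}{k-1}a_1^n}{1-a_{k+r+1}/a_1},
\]
which is the claim at level $r+1$. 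Taking $r=\ell_1-k$ completes the argument. The only real subtlety is the binomial-monotonicity step: a naive factor-by-factor bound $(1-a_j z)^{-1}\leq(1-a_j/a_1)^{-1}(1-a_1 z)^{-1}$ would produce the weaker exponent $\ell_1$ rather than the tight exponent $k$ on $(1-a_1 z)^{-1}$ that the corollary asserts; the joint inductive treatment is what preserves this tight power.
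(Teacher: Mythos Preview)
Your proof is correct and rests on the same two estimates as the paper's: the monotonicity $\binom{i+k-1}{k-1}\le\binom{n+k-1}{k-1}$ for $i\le n$, and the geometric tail bound $\sum_{j\ge 0}(a_\ell/a_1)^j=(1-a_\ell/a_1)^{-1}$ for each $\ell>k$. The only difference is organizational: the paper expands the full Cauchy product at once as $\sum_{u_1+\cdots+u_n=t}\prod_j a_j^{u_j}$, factors out $a_1^t$, splits the composition into the first $k$ coordinates (yielding the binomial) and the last $n-k$ coordinates, and then applies both bounds in a single step; you instead peel off the excess factors one at a time by induction on $r$. The two arguments are essentially the same length and deliver the identical bound; your closing remark explaining why the joint treatment (rather than a factor-by-factor replacement) is needed to keep the exponent at $k$ is a nice addition that the paper leaves implicit.
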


	
\subsection{Performance bounds}
\label{sec:perf-pmoo}
We can now apply the results of Subsection~\ref{sec:snc-comb} to the end-to-end dynamic server. 
In the following, we assume stability, that is
we consider only values of $\theta$ such that for all $j\in\{1, \ldots, n\}$, $\rho'_j(\theta) > \rho_{A_1}(\theta)$ (the residual rate at server $ j $ is larger than the arrival rate).
\paragraph{Output departure process and backlog bound}
With Lemma~\ref{lem:output}, we can compute an arrival bounding generating function for the departure process: 
\begin{equation*}
F_D(\theta, z) = \frac{e^{\theta(\sigma_{A_1}(\theta) + \sigma_{S_{\mathrm{e2e}}}(\theta))} }{\prod_{j\in\pi_1}\big(1-e^{\theta (\rho_{A_1}(\theta) - \rho'_j(\theta))}\big)} \frac{1}{1-e^{\theta\rho_{A_1}(\theta)}z}.
\end{equation*}
We recognize a $(\sigma, \rho)$-constraint for the departure process, and observe that $\rho_D(\theta) = \rho_A(\theta)$. 

From Corollary~\ref{lem:backlog}, the backlog bound for flow 1 is then
	\begin{equation*}
	\p(q(t) \geq b) \leq \frac{e^{\theta(\sigma_{A_1}(\theta) + \sigma_{S_{\mathrm{e2e}}}(\theta))}}{ \prod_{j\in\pi_1}(1-e^{\theta (\rho_{A_1}(\theta) - \rho'_j(\theta))})} e^{-\theta b}.
\end{equation*}

\paragraph{Delay bound}
It is not as straightforward to compute the delay bound: the bounding generating function of the service does not correspond to a $(\sigma, \rho)$-constraint. The bounding generating function for the delay is 
	\begin{equation}
		\label{eq:e2e-delay}
		F_d(\theta, z) = e^{\theta \sigma_{A_1}(\theta)} \frac{e^{\theta\rho_{A_1}(\theta)}F_{S_{\mathrm{e2e}}}(\theta, e^{\theta\rho_{A_1}(\theta)}) -  zF_{S_{\mathrm{e2e}}}(\theta, z)}{1- e^{-\theta\rho_{A_1}(\theta)}z}.
	\end{equation}

As $F_{S_{\mathrm{e2e}}}(\theta, z)$ is a rational function, $F_d(\theta, z)$ is. Moreover, all the singularities of these functions are known. 
Note that $e^{\theta \rho_{A_1}(\theta)}$ is not a singularity of $F_d(\theta, z)$. Indeed,  when $z = e^{-\theta\rho_{A_1}(\theta)}$, the numerator of Equation~\eqref{eq:e2e-delay} is zero. 
The singularities of $F_d(\theta, z)$ are then those of $F_{S_{\mathrm{e2e}}}(\theta, z)$, that is $e^{\theta\rho'_j(\theta)}$, for $j\in\{1,\ldots, n\}$. 
All the terms of the generating function of Equation~\eqref{eq:e2e-delay} can be computed automatically with symbolic computation tools such as SageMath~\cite{sagemath}. 
However, there are several cases where the exact expression is compact enough to be given here.

\begin{itemize}
 \item Residual rates are all distinct (see derivation in Appendix~\ref{app:sing1}): $\forall T\geq 0$,  
	\begin{multline}
		\label{eq:sing1}
		\p(d(t) \geq T)  \leq  [z^T]F_{d}(\theta, z) = \sum_{j=1}^n \frac{e^{\theta(\sigma_{A_1}(\theta)  + \sigma_{S_{\mathrm{e2e}}}(\theta)  + \rho_{A_1}(\theta))}}{1-e^{\theta(\rho_{A_1}(\theta) - \rho'_j(\theta))}}
		  \cdot  \left[\prod_{k\neq j} \frac{1}{1-e^{\theta(\rho'_j(\theta) - \rho'_k(\theta))}}\right] e^{-\theta\rho'_j(\theta)T}.
	\end{multline}
	\item Residual rates are all equal (see derivation in Appendix~\ref{app:uniform}):  $\forall T >  0$,
	\begin{multline}
		\hspace{-4mm}
		\p(d(t) \geq T)  \leq  [z^T]F_{d}(\theta, z)
		=  e^{\theta(\sigma_{A_1}(\theta) + \sigma_{S_{\mathrm{e2e}}}(\theta) + \rho_{A_1}(\theta))}
			  \cdot \left[\sum_{i=1}^{n}\binom{T + i-2}{T-1} \frac{e^{-\theta\rho'_{1}(\theta)T}}{\big(1-e^{-\theta(\rho'_1(\theta) - \rho_{A_1}(\theta))}\big)^{n-i+1}}  \right]. \label{eq:uniform}
	\end{multline}
\end{itemize}
In the general case, one can first use the service bounding generating function of Corollary~\ref{cor:simple-pmoo} to be in the case of equal residual rates, and then use Equation~\eqref{eq:uniform} to obtain $\forall T >  0$,
\begin{multline}
	\label{eq:bound}
\p(d(t) \geq T) \leq \frac{e^{\theta \left(\sigma_{A_1}(\theta) + \sigma_{S_{\mathrm{e2e}}}(\theta) + \rho_{A_1}(\theta)\right)}}{\prod_{j=k+1}^n{1-e^{-\theta (\rho'_j(\theta) - \rho'_1(\theta))}}} 
 \cdot\left[
\sum_{i=1}^{k} \binom{T + i-2}{T - 1} \frac{e^{-\theta \rho'_1(\theta)T}}{\big(1-e^{-\theta(\rho'_1(\theta) - \rho_{A_1}(\theta))}\big)^{k-i + 1}}\right].
\end{multline}

\subsection{Recovering existing results for the canonical tandem}

\begin{figure}[t]
	\centering
	\resizebox{0.62\textwidth}{!}{%
	\begin{tikzpicture}[server/.style={shape=rectangle,draw,minimum height=.8cm,inner xsep=3ex}]
		\node[server,name=S1] at (0,0) {1};
		\node[server,name=S2] at (2,0) {2};
		\node[server,name=S3] at (4,0) {3};
		\node[server,name=S3] at (7,0) {$n$};
		\draw[->, thick, red] (-1,0) node[left] {1} -- (8,0);
		\draw[->, thick, black] (-1,0.4) node[left] {2} to[out=-10,in=-170] (1,.4);
		\draw[->, thick, black]  (1.2,-0.4) node[left] {3} to[out=10,in=170] (3,-.4);
		\draw[->, thick, black]  (3.2,0.4) node[left] {4} to[out=-10,in=-170] (5,.4);
		\draw[thick, black, dotted]  (5,-0.2) -- (5.5,-.2);
		\draw[->, thick, black]  (6,-0.4) to[out=10,in=170] (8,-.4) node[right] {$ n + 1 $};
	\end{tikzpicture}
}%
	\vspace{-2mm}
	\caption{Canonical tandem\label{fig:canonical-tandem}.}
	\vspace{-2mm}
\end{figure}
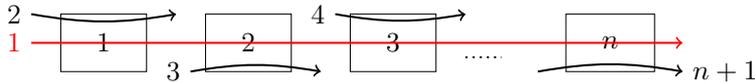

In this subsection, we recover the state of the art in SNC for the ``canonical tandem'' (Figure~\ref{fig:canonical-tandem}), as it is frequently analysed in the literature, e.g. \cite{CBL06, Fidler06, RF11}. 
Since this simple flow interference structure provides no opportunity for the PMOO principle to pay off, our new bounds based on analytic combinatorics cannot further improve it (in contrast to the scenarios we present in Section~\ref{sec:numerical}).
To the best of our knowledge, the best technique for this topology is to calculate the optimum of two bounds originally derived in \cite[Theorem~3]{Fidler06}, denoted by ``\cite{Fidler06}~Thm.~3'' below.
However, this bound targets a closed-form solution for the stochastic delay and, hence,  compromises on tightness.
Attempting a fair comparison, we also compare the novel result from  Equation~\eqref{eq:bound} to a tighter version of ``\cite{Fidler06}~Thm.~3'' that does not make this compromise and is given in 
\cite{Fidler06}~Eqn.~(9) as an intermediate result.
One can show that our Equation~\eqref{eq:bound} recovers the result in \cite{Fidler06}~Eqn.~(9), since both start with the same inequalities and, afterwards, only equivalent term manipulations are used.

\begin{figure*}[t]
	\centering
	\subfloat[\centering util~=~0.55]{
			\includegraphics[width=0.3\textwidth]{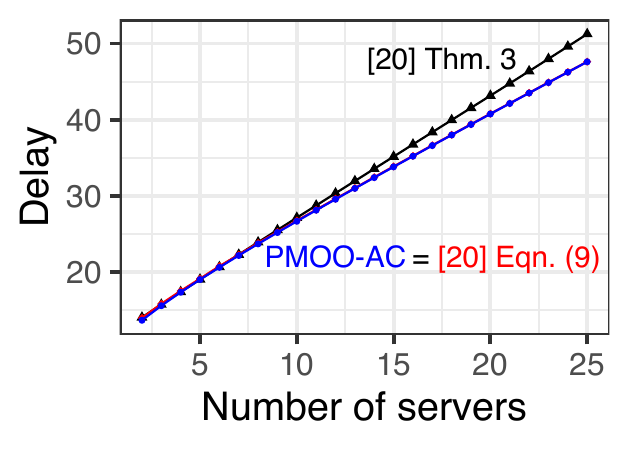}
		}
	\hspace{5mm}
	\subfloat[\centering util~=~0.9]{
			\includegraphics[width=0.3\textwidth]{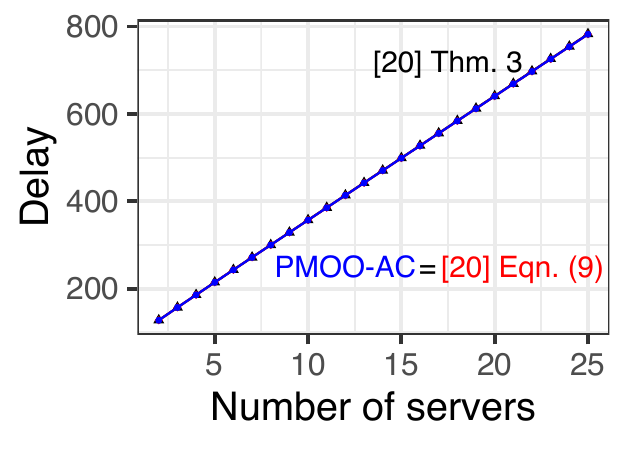}
		}
	\vspace{-2mm}
	\caption{Stochastic delay bounds for the canonical tandem (for traffic with independent exponentially distributed increments with $ \lambda=1 $ and constant rate servers). \label{fig:canonical tandem-exp}}
	\vspace{-4mm}
\end{figure*}

We also verified this numerically for exponential i.i.d. arrivals in Figure~\ref{fig:canonical tandem-exp}.
The closed-form solution in ``\cite{Fidler06}~Thm.~3'' tends to lead to less accurate bounds, depending on the utilization as well as the number of servers.
We note that it is the only approach that exploits the positive part in Theorem~\ref{thm:residual-ser}, but this effect is negligible for the chosen traffic class.
Anyway, the positive part can also be incorporated in the analytic combinatorics; yet, we omit this here for the sake of brevity.

\vspace{-2mm}

\subsection{Relation to state-of-the-art analyses}
\label{subsec:sota-analysis}

A typical network calculus analysis to derive end-to-end performance bounds consists of three separate sequential steps:
\begin{enumerate}
	\item Reducing the network to a tandem traversed by the 
	flow of interest, by output bounding.
	\item Reducing the tandem to an end-to-end server that represents the complete system.
	\item Computation of performance bounds.
\end{enumerate}

Various SNC analysis techniques have been derived in the literature.
In \cite{BS13-1}, a \textit{sequential} separated flow analysis (SFA) is used, where each network calculus operation is applied one after the other in contrast to a simultaneous technique as in Theorem~\ref{th:e2e-sc}.
On the other hand, it has been shown in \cite{Fidler06} that combining step (2) and (3) is able to avoid a sequencing penalty to some degree (see also \cite{Beck16}).
Moreover, one could also convolve servers as much as possible before subtracting cross-flows. 
In \cite{NS20}, this is called ``PMOO'', however, it is not fully able to use the power of the PMOO principle, as it involves a sequential order of network calculus operations which is why we call it ``seqPMOO'' in the following. 

However, none of the above mentioned techniques is able to avoid a sequencing penalty entirely.
On the other hand, the newly introduced PMOO analysis avoids this pitfall by combining all three steps of the analysis and also reduces the overall number of calculations.
In the next section (Section~\ref{sec:numerical}), we show that this approach leads to significantly more accurate delay bounds.

\vspace{-1mm}

	\section{Numerical Evaluation of Tree Networks}
\label{sec:numerical}

In this section, we compute bounds on the delay's violation probability and stochastic delay bounds applying state-of-the-art techniques as well as our unleashed PMOO. We perform several experiments for different network topologies. 
For the arrivals, we assume three discrete-time processes all adhering to the class of $(\sigma_A, \rho_A)$-constrained arrivals: 
Exponentially distributed arrivals increments \cite{Beck16} with parameter $ \lambda $,
Weibull distributed arrival increments with fixed shape parameter $ k=2 $ and scale parameter $ \lambda $,
and discrete-time Markov-modulated On-Off (MMOO) arrivals \cite{Chang00, Beck16}.
The latter can be described by three parameters: the probability to stay in the ``On''-state in the next time step, $ p_{\mathrm{on}} $, the probability to stay in the ``Off''-state, $ p_{\mathrm{off}} $, and a constant peak rate $ P$, at which data is sent during the ``On''-state.
For the service, we always assume work-conserving $ S $-servers with a constant rate.

\subsection{Interleaved tandem}

\begin{figure*}[b]
	\vspace{-5mm}
	\centering
	\subfloat[\centering Exponential distribution with~$\lambda_i = 1.5$, $ i = 1,2,3 $ \label{fig:interleaved-tandem-intro-again}]{
		\includegraphics[width=0.3\textwidth]{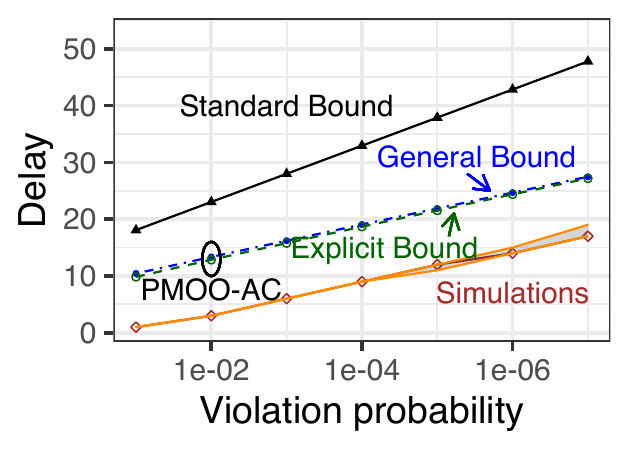}
	}
	\subfloat[\centering Weibull distribution with~$\lambda_i = 1.0, $ $ i = 1,2,3 $]{
		\includegraphics[width=0.3\textwidth]{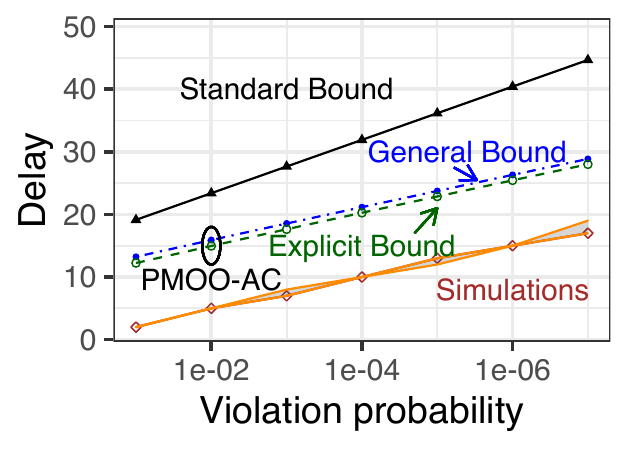}
	}
	\subfloat[\centering MMOO with $ p_{\mathrm{on}, i} = 0.5 $, $ p_{\mathrm{off}, i} = 0.5,$ $ P_i = 1.4,$ $ i = 1,2,3 $]{
		\includegraphics[width=0.3\textwidth]{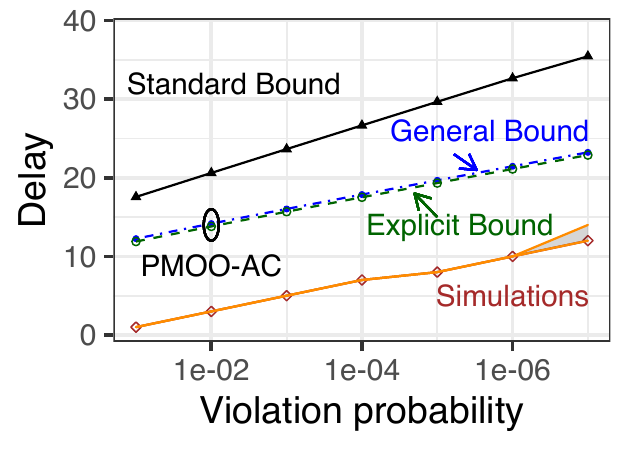}
	}
	\vspace{-2mm}
	\caption{Delay bounds and simulations for the interleaved tandem with server rates $ C_1 = 2.5, $  $C_2 = 3.0, $ $ C_3 = 2.0. $}
	\label{fig:interleaved-tandem-results}
\end{figure*}

In our first numerical evaluation, we calculate stochastic delay bounds for a standard example in network calculus when PMOO effects shall be illustrated: the interleaved tandem network in Figures~\ref{fig:toy}.
These are compared to simulation results.
The ``standard bound'' consists of the minimum of the state-of-the-art techniques described in Section~\ref{subsec:sota-analysis}.
They are then compared to our new analytic combinatorics based PMOO using two variants:
The first, again based on Equation~\eqref{eq:bound}, is called ``PMOO-AC: General Bound'' in this example.
The second, ``PMOO-AC: Explicit Bound'' exploits that we choose all parameters such that the residual rates are distinct.
Therefore, we can apply Equation~\eqref{eq:sing1}.
In addition, we provide simulation results with respective pointwise 95\%-confidence bands that are based on order statistics of a binomially distributed sample \cite{HM11} to see how the bounds relate to the empirical delay.
The results are provided in Figure~\ref{fig:interleaved-tandem-results}.

We observe that, while the PMOO-AC bounds lead to quite similar results, they both outperform the standard bound significantly. 
For example, for a delay violation probability of $10^{-3}$, the delay bound is improved from 28 to 18, and for a violation probability of $10^{-7}$, from 45 to 31. 
These examples indicate an improvement of more than 35\%.


These positive results are mainly caused by the fact that the PMOO analysis is able to provide bounds without introducing any method-pertinent dependencies.
The standard bound, in contrast, suffers from such a dependency in the calculation and, consequently, needs to apply Hölder's inequality.
Furthermore, it also potentially loses accuracy due to the sequencing penalty.

In addition, we see that the gap between simulation results and bounds is considerably reduced and 
the scaling of the delay in the simulations
is captured well (getting closer to single-node results again). Recalling that  Figure~\ref{fig:interleaved-tandem-intro-again} without the new bounds has been used in Section~\ref{sec:intro} to illustrate that the known SNC gap will widen too much, we now provide a new prospect for this again.

 
 \vspace{-2mm}

\subsection{Extended interleaved tandem}
\label{subsec:extend-interleaved-tandem}

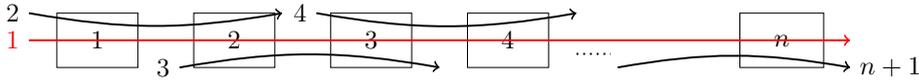
\begin{figure}[t]
	\centering
	\resizebox{0.75\textwidth}{!}{%
	\begin{tikzpicture}[server/.style={shape=rectangle,draw,minimum height=.8cm,inner xsep=3ex}]
		\node[server,name=S1] at (0,0) {1};
		\node[server,name=S2] at (2,0) {2};
		\node[server,name=S3] at (4,0) {3};
		\node[server,name=S3] at (6,0) {4};
		\node[server,name=S3] at (10,0) {$n$};
		\draw[->, thick, red] (-1,0) node[left] {1} -- (11,0);
		\draw[->, thick, black] (-1,0.4) node[left] {2} to[out=-10,in=-170] (2.7,.4);
		\draw[->, thick, black]  (1.2,-0.4) node[left] {3} to[out=10,in=170] (5,-.4);
		\draw[->, thick, black]  (3.2,0.4) node[left] {4} to[out=-10,in=-170] (7,.4);
		\draw[thick, black, dotted]  (7,-0.2) -- (7.5,-.2);
		\draw[->, thick, black]  (7.6,-0.4) to[out=10,in=170] (11,-.4) node[right] {$ n + 1 $};
	\end{tikzpicture}
	}%
	\vspace{-2mm}
	\caption{Extended interleaved tandem.} \label{fig:extended-interleaved-tandem}
	\vspace{-3mm}
\end{figure}

In the next experiment, we generalize the case of an interleaved tandem by varying its lengths while keeping the interference structure (Figure~\ref{fig:extended-interleaved-tandem}).
Here, the standard bound only includes the sequential and simultaneous SFA approaches, as the number of possible (combinations of) network calculus operations in the seqPMOO grows exponentially with the number of servers and, therefore, is computationally prohibitive.


In Figure~\ref{fig:extended-interleaved-tandem-results}, we show stochastic delay bounds for a fixed violation probability of $ 10^{-6}. $ 
While the standard bound explodes in the number of servers (we have only included the results from 3 to 5 servers), the new technique scales significantly better.
This is mainly due to the fact that the SFA leads to the application of $ n - 1 $ Hölder inequalities for an interleaved tandem of size $ n $, whereas the new PMOO-AC does not incur any method-pertinent dependencies in the analysis of this topology.

\begin{figure*}[t]
	\centering
	\subfloat[\centering Exponential distribution with~$~\lambda_i = 2.0 $]{
		\includegraphics[width=0.3\textwidth]{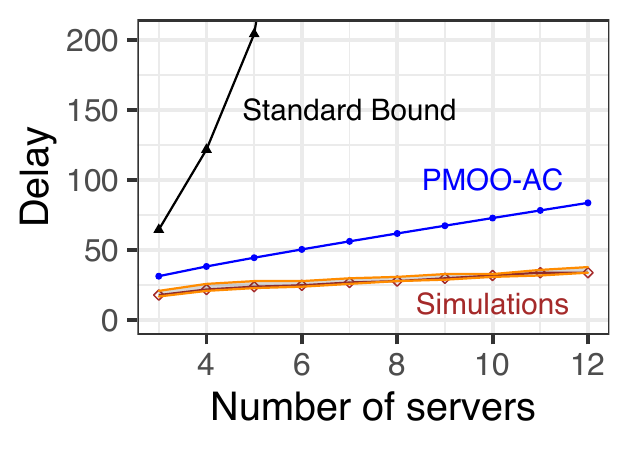}
	}
	\subfloat[\centering Weibull distribution with~$\lambda_i~=~0.7 $]{
		\includegraphics[width=0.3\textwidth]{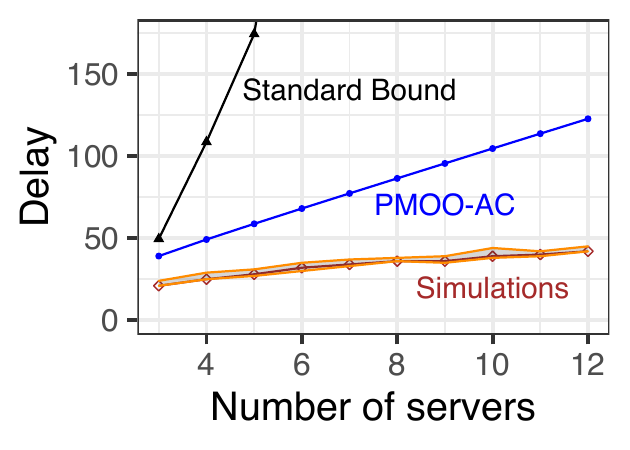}
	}
	\subfloat[\centering MMOO with $ p_{\mathrm{on}, i} = 0.5, $  $p_{\mathrm{off}, i} = 0.5, $ $ P_i = 1.0 $]{
		\includegraphics[width=0.3\textwidth]{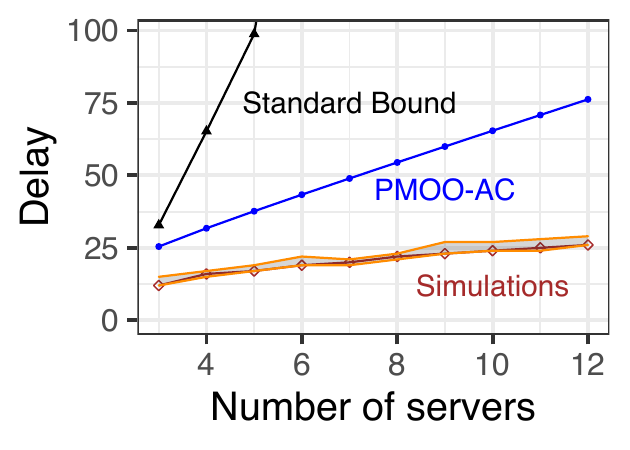}
	}
	\vspace{-1mm}
	\caption{Delay bounds for the extended interleaved tandem with server rates $ C_i = 2.0 $ for $ i= 1, \dots, 12. $
		\label{fig:extended-interleaved-tandem-results}}
	\vspace{-4mm}
\end{figure*}

Not only does this lead to tighter performance bounds, it also impacts runtimes significantly as it reduces an $ n $-dimensional non-linear optimization problem (one $ \theta $ and $ n-1 $ Hölder parameters) to a 1-dimensional optimization problem.\footnote{In our numerical experiments, we apply a grid search followed by a downhill simplex algorithm to optimise the parameters.}
Specifically, for the standard bound (in the case of MMOO arrivals), runtimes increase quickly from 5.2 seconds (3 servers) over 2 minutes (4 servers) to roughly 1.5 hours (5 servers).
On the other hand, PMOO-AC took a maximum runtime of 0.46 seconds in the case of 12 servers.

\vspace{-2mm}

\subsection{Case study: tree network}


\begin{figure}[b]
	\vspace{-4mm}
	\centering
	\resizebox{0.38\textwidth}{!}{%
	\begin{tikzpicture}[server/.style={shape=rectangle,draw,minimum height=.8cm,inner xsep=3ex}]
		\node[server,name=S1] at (0,0) {1};
		\node[server,name=S2] at (0,2) {2};
		\node[server,name=S3] at (2,0) {3};
		\node[server,name=S4] at (4,0) {4};
		\draw[->, thick, red] (-1,-0.12) node[left] {1} -- (5,-0.12);
		\draw[->, thick, black] (-1,-0.3) node[below left] {2} -- (1,-0.3);
		\draw[->, thick, black] (-1,2.15) node[left] {3} -- (1.1,2.15) -- (1.1, 0.3) -- (3, 0.3);
		\draw[->, thick, black]  (-1,1.85) node[left] {4}  -- (0.9,1.85) -- (0.9,0.15) -- (5,0.15);
	\end{tikzpicture}
	}%
	\vspace{-3mm}
	\caption{Case study tree network.} \label{fig:case-for-tree-networks}
\end{figure}
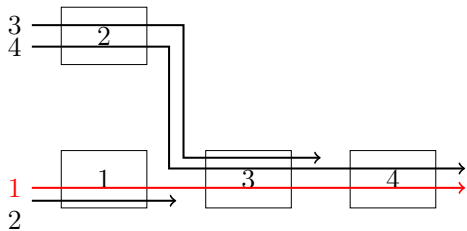


In Section~\ref{subsec:sota-analysis}, we explained that the PMOO-AC calculates performance bounds by combining all the steps of the analysis into a single large one.
The following case study shall mainly investigate the benefit of this.

To that end, we consider the tree network in Figure~\ref{fig:case-for-tree-networks}.
Standard SNC techniques first compute the output bounds of flows 3 and 4 at server 2 in order to compute the residual service at servers 3 and 4; this incurs again method-pertinent dependencies.
The application of Theorem~\ref{th:e2e-sc} allows us to circumvent these dependencies. 
In contrast, state-of-the-art analysis using SFA needs three applications of Hölder's inequality and seqPMOO requires one, respectively.

\begin{figure*}[t]
	\centering
	\subfloat[\centering Exponential distribution with~$~\lambda_i = 2.0$]{
		\includegraphics[width=0.3\textwidth]{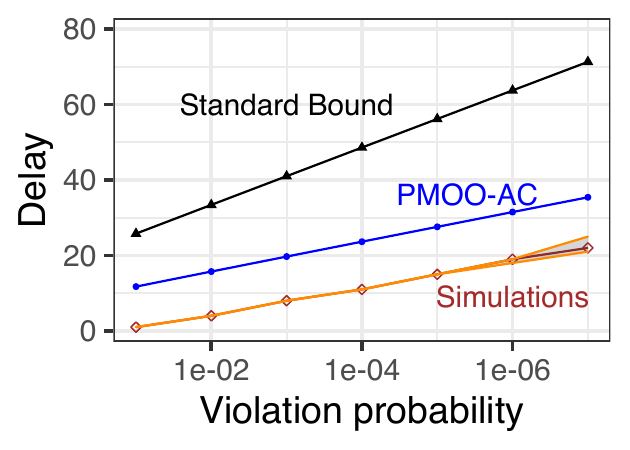}
	}
	\subfloat[\centering Weibull distribution with $ \lambda_i~=~0.67$]{
		\includegraphics[width=0.3\textwidth]{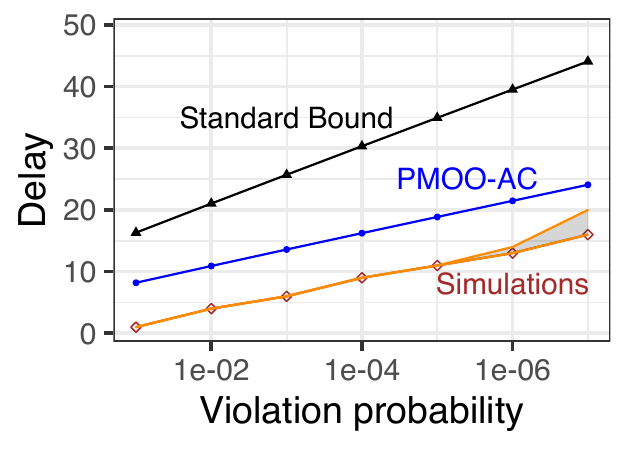}
	}
	\subfloat[\centering MMOO with $ p_{\mathrm{on}, i} = 0.5, $  $ p_{\mathrm{off}, i} = 0.5, $ $ P_i = 1.0 $]{
		\includegraphics[width=0.3\textwidth]{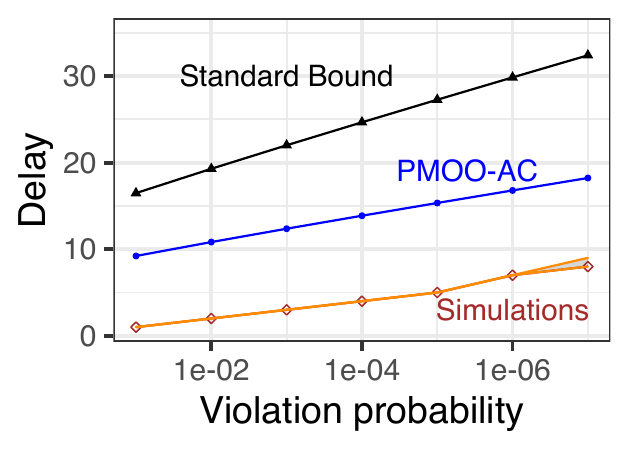}
	}	
	\vspace{-2mm}
	\caption{Delay bounds and simulations for the tree network with server rates $ C_i = 2.0, $ $ i = 1, \dots, 4. $ \label{fig:case-for-tree-networks-results}}
	\vspace{-3mm}
\end{figure*}

The results are shown in Figure~\ref{fig:case-for-tree-networks-results}.
Similar to the results for the interleaved tandem, we observe that the PMOO considerably improves the bound on the delay's violation probability.
Again, we are able to achieve a similar scaling compared to the simulation results, in contrast to the standard bound.
Further, while the state-of-the-art analysis requires an optimisation of up to 4 parameters (three Hölder and $ \theta $), the PMOO only has $ \theta $ to optimise, 
substantially improving the runtime. 


	\section{Towards Feedforward Network Analysis under Partial Dependence}
\label{sec:dependent}
In previous sections, we focused on mutually independent flows and servers in tree-reducible networks. 
However, if we want to analyse general feedforward networks, rejoining flows eventually force us to deal with dependencies. As discussed in Section~\ref{sec:pmoo}, in the presence of rejoining flows, 
the network will not be tree-reducible any longer. Thus, we cannot completely avoid method-pertinent dependencies using PMOO, but need to extend it to the partially dependent case.

As mentioned in Section~\ref{sec:framework}, the use of Hölder's inequality allows us to calculate performance bounds also in the dependent case.
Yet, as we discuss in this section, its number of applications as well as their induced inaccuracy can be influenced by the transformation to a tree-reducible network -- our goal is to minimize the impact of Hölder's inequality on the performance bounds.

First, in Subsection~\ref{sec:dep-curves}, we extend PMOO under partial dependence; next, in Subsection~\ref{sec:diamond}, we apply this to the canonical example of rejoining flows: the ``diamond network''. 
Finally, in Subsection~\ref{sec:discussion}, we discuss how to extend these promising results to larger feedforward networks.

\subsection{End-to-end server and performance bounds}
	\label{sec:dep-curves}

	To deal with the partial dependency of random variables, we adapt the notion of \textit{dependency graph} defined to prove the Lov\'asz Local Lemma (see~\cite{Mitz05}, for instance). Let $(X_h)_{h\in H}$ be a finite family of processes. In our case a process can be a service process $S_j$ or an arrival process $A_i$. 
	The dependency graph associated to $(X_h)_{h\in H}$ is the graph whose vertices are the processes $(X_h)_{h\in H}$ and process $X_h$ is not adjacent with processes with which it is mutually independent.
	More precisely, if $\Gamma(X_h)$ is the set of neighbours of $X_h$, $X_h$ is mutually independent of  $\{X_{h'}~|~X_{h'}\notin(\Gamma(X_h)\cup \{X_h\})\}$.

	Assume that the dependency graph contains $K$ strongly connected components $G_1,\ldots, G_K$, and that for all $h\in H$, and for some $\theta > 0$,  $\E[e^{\theta X_h(s_h, t_h)}] \leq e^{\theta\lambda_h(\theta, s_h, t_h)}$. The following lemma shows how to use as few H\"older inequalities as possible. 
	
	\begin{lemma}
		\label{lem:partialH}
		With the hypotheses and notations above, for positive $(p_h)_{h\in H}$ such that $ \forall k \in\{1, \ldots, K\} $,  $\sum_{h\in G_k} \frac{1}{p_h} = 1,$
		$$\E[e^{\theta(\sum_{h\in H} X_h(s_h, t_h))}] \leq \prod_{h\in H} e^{\theta\lambda_h(p_h\theta, s_h, t_h)}. $$
	\end{lemma}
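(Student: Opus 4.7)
The plan is to factorise the expectation across the $K$ connected components of the dependency graph and then apply H\"older's inequality \emph{locally} within each component. The two steps fit together cleanly because the hypothesis requires the exponents $(p_h)_{h\in G_k}$ to satisfy $\sum_{h\in G_k} 1/p_h = 1$ separately for each $k$, which is exactly the H\"older constraint one needs inside a single component.

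For the first step, I would rewrite
\[
\E\bigl[e^{\theta \sum_{h\in H} X_h(s_h, t_h)}\bigr] = \E\Bigl[\prod_{k=1}^K \prod_{h\in G_k} e^{\theta X_h(s_h, t_h)}\Bigr]
\]
and invoke the definition of the dependency graph: each $X_h$ is mutually independent of the family of processes not adjacent to it, and therefore in particular of any subfamily lying entirely in another component. Reading ``mutually independent'' as joint independence of the underlying vectors, this lets me factorise the expectation across components as
\[
\E\bigl[e^{\theta \sum_{h\in H} X_h(s_h, t_h)}\bigr] = \prod_{k=1}^K \E\Bigl[\prod_{h\in G_k} e^{\theta X_h(s_h, t_h)}\Bigr].
\]
For the second step, I would apply~\eqref{eq:holder} inside each component with the exponents $(p_h)_{h\in G_k}$, obtaining
\[
\E\Bigl[\prod_{h\in G_k} e^{\theta X_h(s_h, t_h)}\Bigr] \leq \prod_{h\in G_k} \E\bigl[e^{p_h\theta X_h(s_h,t_h)}\bigr]^{1/p_h} \leq \prod_{h\in G_k} e^{\theta \lambda_h(p_h\theta, s_h, t_h)},
\]
where in the last inequality I use the marginal MGF hypothesis $\E\bigl[e^{p_h\theta X_h(s_h,t_h)}\bigr] \leq e^{p_h\theta \lambda_h(p_h\theta, s_h, t_h)}$ and then take the $1/p_h$-th power. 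Multiplying these bounds over $k=1,\ldots,K$ yields exactly the claimed inequality.

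The only delicate point is the first step. The definition of dependency graph recalled just before the lemma formally only asserts that each single $X_h$ is independent of the collection of processes outside $\Gamma(X_h)\cup\{X_h\}$, whereas the factorisation needs joint independence of the whole vectors across distinct components. This is the standard reading in the Lov\'asz Local Lemma framework and the natural one in the SNC application we have in mind, where each connected component assembles processes driven by one independent pool of external sources; once the convention is spelled out, the rest of the argument is a mechanical combination of H\"older's inequality and the marginal MGF hypotheses.
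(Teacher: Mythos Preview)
Your proposal is correct and follows essentially the same approach as the paper's proof: factorise the expectation across the connected components of the dependency graph using mutual independence, then apply H\"older's inequality together with the MGF bounds inside each component. Your additional remark about the precise meaning of ``mutually independent'' is a fair caveat, but it does not alter the argument, which matches the paper's almost line for line.
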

	\begin{proof}
		\begin{multline*}
			\E[\prod_{h \in H} e^{\theta X_h(s_h, t_h)}]  =  \prod_{k=1}^K \E[\prod_{h \in G_k} e^{\theta X_h(s_h, t_h)}] 
			\leq \prod_{k=1}^K \prod_{h \in G_k}  (e^{p_{h}\theta \lambda_h(p_{h} \theta, s_h, t_h)}) ^{1/p_{h}} 
			= \prod_{h \in H} e^{\theta\lambda_h(p_h\theta, s_h, t_h)}.
		\end{multline*}
		The first equality is deduced from the dependency graph: we have mutual independence between families of processes of distinct connected components. 
		The inequality is the application of the H\"older inequality inside the connected components, as well as the use of the MGF bounds. 
	\end{proof}
	
We can now derive an end-to-end service for partially dependent servers and arrival processes. 
%
\begin{theorem}
	\label{th:e2e-gs-dep}
	The end-to-end service for flow 1 is bounded by the service bounding generating function
	\begin{multline*}F_{S_{\mathrm{e2e}}} (\theta, (p_i), (q_j), z)=  \frac{e^{\theta(\sum_{i\neq 1}\sigma_{A_i}(p_i\theta) + \sum_j \sigma_{S_j}(q_j\theta))}}{ \prod_{j\notin \pi_1} (1- e^{ - \theta(\rho_{S_j}(q_j\theta) - \sum_{i\in j} \rho_{A_i}(p_i\theta))})} 
		\prod_{j\in\pi_1} \frac{1}{1-e^{ - \theta(\rho_{S_j}(q_j\theta)- \sum_{1\neq i\in j} \rho_{A_i}(p_i\theta))}z},
	\end{multline*}
	for all $(p_i), (q_j)$ such that for all $k\leq K$, $\sum_{A_i\in G_k\setminus \{1\}} \frac{1}{p_i} + \sum_{S_j\in G_k} \frac{1}{q_j}=1$.
\end{theorem}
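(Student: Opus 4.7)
The plan is to adapt the proof of Theorem~\ref{th:e2e-gs-indep} essentially line for line, with a single substitution: the step in which independence was used to factor the joint MGF (Equation~\eqref{eq:indep}) is replaced by an application of the partial Hölder result, Lemma~\ref{lem:partialH}; all subsequent algebra carries over unchanged. Concretely, I start from the pathwise end-to-end server of Theorem~\ref{th:e2e-sc}, dropping the positive part (which only weakens the lower bound on $S_{e2e}$ and hence strengthens the upper bound on $e^{-\theta S_{e2e}}$), exponentiate, replace the sup by a sum of nonnegative terms, and exchange expectation and sum to get
\begin{equation*}
\E[e^{-\theta S_{e2e}(t_{\pi_1(1)}, t_{n+1})}] \leq \sum_{\forall j,\; t_j \leq t_{j^\bu}} \E\Big[\prod_{j=1}^n e^{-\theta S_j(t_j, t_{j^\bu})} \prod_{i=2}^m e^{\theta A_i(t_{\pi_i(1)}, t_{\pi_i(\ell_i)^\bu})}\Big].
\end{equation*}

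The crucial step is to bound the inner expectation. I view it as the MGF of a sum over the family $\{-S_j(t_j,t_{j^\bu})\}_j \cup \{A_i(t_{\pi_i(1)}, t_{\pi_i(\ell_i)^\bu})\}_{i\neq 1}$ and apply Lemma~\ref{lem:partialH} with coefficients $(q_j)$ for the services and $(p_i)$ for the arrivals, chosen so that $\sum_{A_i \in G_k \setminus \{1\}} 1/p_i + \sum_{S_j \in G_k} 1/q_j = 1$ on each (strongly) connected component $G_k$ of the dependency graph. Invoking the $(\sigma_{S_j},\rho_{S_j})$- and $(\sigma_{A_i},\rho_{A_i})$-constraints at the rescaled arguments $q_j\theta$ and $p_i\theta$ then yields
\begin{equation*}
\E[\cdots] \leq \prod_{j=1}^n e^{\theta(\sigma_{S_j}(q_j\theta) - \rho_{S_j}(q_j\theta)(t_{j^\bu} - t_j))} \prod_{i=2}^m e^{\theta(\sigma_{A_i}(p_i\theta) + \rho_{A_i}(p_i\theta)(t_{\pi_i(\ell_i)^\bu} - t_{\pi_i(1)}))}.
\end{equation*}
This is precisely what the independent proof derived from factoring the expectation, modulo the fact that all $\sigma$'s and $\rho$'s are evaluated at rescaled arguments.

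From here, the rest of the proof of Theorem~\ref{th:e2e-gs-indep} transfers verbatim: change variables $u_j = t_{j^\bu} - t_j$; use $t_{\pi_i(\ell_i)^\bu} - t_{\pi_i(1)} = \sum_{j \in \pi_i} u_j$ to regroup exponents by server, collecting the coefficient of each $u_j$ as $\rho_{S_j}(q_j\theta) - \sum_{1 \neq i \in j} \rho_{A_i}(p_i\theta)$; sum over the free variables $u_j$ for $j \notin \pi_1$ as plain geometric series (producing the denominator $\prod_{j \notin \pi_1}(1 - e^{-\theta(\cdots)})$); and finally recognize the sum constrained by $\sum_{j \in \pi_1} u_j = t_{n+1} - t_{\pi_1(1)}$ as the $(t_{n+1}-t_{\pi_1(1)})$-th coefficient of the Cauchy product of the geometric series in $z$. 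This reconstructs exactly $F_{S_{e2e}}(\theta, (p_i), (q_j), z)$ as stated.

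The only substantive obstacle is the bookkeeping around the flow of interest: one must be careful that Lemma~\ref{lem:partialH} is applied only to the family $\{S_j\}_j \cup \{A_i\}_{i\neq 1}$ of processes that actually enter the end-to-end service bound, so that the normalization $\sum 1/p_i + \sum 1/q_j = 1$ within a component excludes $A_1$ (even when $A_1$ happens to share a component with other processes in the original dependency graph, which is what the notation $G_k \setminus \{1\}$ records). Once this is settled, no new analytic-combinatorics machinery beyond Section~\ref{sec:series-pmoo} is needed.
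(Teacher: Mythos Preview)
Your proposal is correct and matches the paper's approach exactly: the paper's proof simply states that one follows the proof of Theorem~\ref{th:e2e-gs-indep} line for line, replacing the independence factorization at Equation~\eqref{eq:indep} by an application of Lemma~\ref{lem:partialH} with the identifications $X_h = A_i$ (giving $\lambda_h(\theta,s_h,t_h) = \sigma_{A_i}(\theta) + \rho_{A_i}(\theta)(t_h - s_h)$) and $X_h = -S_j$ (giving $\lambda_h(\theta,s_h,t_h) = \sigma_{S_j}(\theta) - \rho_{S_j}(\theta)(t_h - s_h)$). Your additional remark about excluding $A_1$ from the H\"older normalization is a useful clarification that the paper leaves implicit.
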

\begin{proof}
	The proof follows almost along the same lines as in Theorem~\ref{th:e2e-gs-indep}. 
	Lemma~\ref{lem:partialH} is used at Equation~\eqref{eq:indep}.  
	If  $X_h = A_i$, we have $s_h = t_{\pi_i(1)}$, $t_h = t_{\pi_i(\ell_i)}$ and $\lambda_h(\theta, s_h, t_h) = \sigma_{A_i}(\theta) + \rho_{A_i}(\theta)(t_h-s_h)$. If  $X_h = S_j$, we have $s_h = t_{j}$, $t_h = t_{j^{\bu}}$ and $\lambda_h(\theta, s_h, t_h) = \sigma_{S_j}(\theta) - \rho_{S_j}(\theta)(t_h-s_h).$ 
\end{proof}

This is straightforward if $A_1$'s connected component is a singleton: $A_1$ is then mutually independent of all other processes and Lemmas~\ref{lem:output} and~\ref{lem:delay} apply. 
If $A_1$ is not mutually independent of all other processes, the end-to-end server and $A_1$ are not independent. 
While a first solution is to apply again H\"older inequality, we propose to compute the bound directly. 
Instead of computing the end-to-end server first, we directly start with the delay expression from Theorem~\ref{th:e2e-sc}. 
The proofs use the same arguments as before, so we defer them to Appendix~\ref{app:dependent}.
\begin{theorem}
	\label{th:dep}
	With the same notations as above,
	for all $(p_i), (q_j)$ such that for all $k\leq K$, $\sum_{A_i\in G_k} \frac{1}{p_i} + \sum_{S_j\in G_k} \frac{1}{q_j}=1$ and $\theta \geq 0$ such that for all $j\in\{1, \ldots, n\}$, $ \sum_{i\in j} \rho_{A_i}(p_i\theta) < \rho_{S_j}(q_j\theta)$,
	\begin{enumerate}
		\item the departure process of flow 1 is bounded by the arrival bounding generating function
		\begin{align*}
			F_{D_{\mathrm{e2e}}}(\theta, (p_i),(q_j), z) & = \frac{e^{\theta \sigma_{A_1}(p_1\theta)} F_{S_{\mathrm{e2e}}}(\theta, (p_i)_{i\neq 1}, (q_j), e^{\theta \rho_{A_1}(p_1\theta)})}{1-e^{\theta \rho_{A_1}(p_1\theta)}z};
		\end{align*}
		\item the delay of flow $A_1$ is bounded by the generating function
		\begin{multline*}F_{d}(\theta, (p_i), (q_j), z) = \frac{{e^{\theta (\sigma_{A_1}(p_1\theta) + \rho_{A_1}(p_1\theta))}} F_{S_{\mathrm{e2e}}}(\theta, (p_i)_{i\neq 1}, (q_j), e^{\theta \rho_{A_1}(p_1\theta)})}{1-e^{-\theta \rho_{A_1}(p_1\theta)}z} \\ - \frac{ e^{\theta \sigma_{A_1}(p_1\theta)}z F_{S_{\mathrm{e2e}}}(\theta, (p_i)_{i\neq 1}, (q_j), z)}{1-e^{-\theta \rho_{A_1}(p_1\theta)}z}.
		\end{multline*}
	\end{enumerate}
\end{theorem}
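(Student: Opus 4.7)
The plan is to avoid a separate Hölder application between $A_1$ and the end-to-end server by folding $A_1$ into the same joint Hölder decomposition already used in Theorem~\ref{th:e2e-gs-dep}. This is precisely what the shifted per-component constraint $\sum_{A_i \in G_k} 1/p_i + \sum_{S_j \in G_k} 1/q_j = 1$ encodes, with $A_1$ now also counted in the sum (as opposed to $A_i \neq 1$ in Theorem~\ref{th:e2e-gs-dep}).

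For part (1), I would start from $D(s, t) \leq A_1 \deconv S_{e2e}(s, t)$ and substitute the formula for $S_{e2e}$ from Theorem~\ref{th:e2e-sc}, dropping the $[\,\cdot\,]_+$. Negating the infimum over the internal times $(t_j)$ turns it into a supremum; applying the Chernoff bound and union-bounding over $u$ and $(t_j)$ yields
$$\E\bigl[e^{\theta A_1 \deconv S_{e2e}(s,t)}\bigr] \leq \sum_{u \leq s} \sum_{(t_j)} \E\bigl[e^{\theta(A_1(u, t) + \sum_{i \neq 1} A_i(t_{\pi_i(1)}, t_{\pi_i(\ell_i)^\bu}) - \sum_j S_j(t_j, t_{j^\bu}))}\bigr].$$
Lemma~\ref{lem:partialH} then controls the joint expectation with the Hölder parameters $(p_i), (q_j)$, now including $p_1$ for $A_1$. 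From here the computation mirrors the proof of Theorem~\ref{th:e2e-gs-indep}: I would change variables $u_j = t_{j^\bu} - t_j$, recognise a Cauchy product of geometric series over $j \notin \pi_1$, and split $A_1(u, t) = A_1(u, s) + A_1(s, t)$. The $(s - u)$ contribution combines with the remaining $(u, s)$-indexed terms into a Hadamard product that, via Equation~\eqref{eq:hadamard}, evaluates $F_{S_{e2e}}(\theta, (p_i)_{i \neq 1}, (q_j), \cdot)$ at $z = e^{\theta \rho_{A_1}(p_1 \theta)}$, while the $(t - s)$ contribution provides the outer geometric factor $1/(1 - e^{\theta \rho_{A_1}(p_1 \theta)} z)$.

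For part (2), I would begin from the sample-path delay characterisation $d(t) > T \Rightarrow \exists s \leq t,\ A_1(s, t) > S_{e2e}(s, t + T)$, union-bound over $s$, and apply Chernoff to each term:
$$\p(d(t) > T) \leq \sum_{s \leq t} \E\bigl[e^{\theta(A_1(s, t) - S_{e2e}(s, t+T))}\bigr].$$
Each inner expectation is bounded exactly as in part (1): expand $S_{e2e}(s, t+T)$ via Theorem~\ref{th:e2e-sc}, apply Lemma~\ref{lem:partialH} jointly to $A_1, (A_i)_{i\neq 1}, (S_j)$, and collapse the resulting geometric sums. This produces one piece depending on $z$ (carrying the $t+T$ shift) and one piece evaluated at $z = e^{\theta \rho_{A_1}(p_1 \theta)}$. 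Summing over $s$ then fits the ``sum of last terms'' template of Equation~\eqref{eq:elie}, whose generating-function identity unwinds to the claimed expression for $F_d$; this is the same final step that converts Lemma~\ref{lem:output} into Lemma~\ref{lem:delay} in the independent case.

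The main obstacle will be careful bookkeeping: verifying that the per-component constraint is respected once $A_1$ is added, and splitting $A_1(u, t)$ so that Equation~\eqref{eq:hadamard} produces exactly the substitution $z \mapsto e^{\theta \rho_{A_1}(p_1 \theta)}$ inside $F_{S_{e2e}}$. For the delay sub-case, the extra subtlety is aligning the time arguments $(s, t+T)$ with the deconvolution-style algebra so that the numerator $F_{S_{e2e}}(\ldots, e^{\theta \rho_{A_1}(p_1\theta)}) - e^{-\theta \rho_{A_1}(p_1\theta)} z\, F_{S_{e2e}}(\ldots, z)$ emerges naturally from Equation~\eqref{eq:elie}.
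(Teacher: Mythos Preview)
Your proposal is correct and matches the paper's proof essentially step for step: expand $S_{e2e}$ via Theorem~\ref{th:e2e-sc}, union-bound, apply Lemma~\ref{lem:partialH} jointly to $A_1,(A_i)_{i\neq1},(S_j)$ with the enlarged per-component constraint, then recognise the resulting sums as $[z^{t-u}]F_{A_1}\cdot[z^{s-u}]F_{S_{e2e}}$ (resp.\ $[z^{t-s}]F_{A_1}\cdot[z^{t+T-s}]F_{S_{e2e}}$) and finish exactly as in Lemmas~\ref{lem:output} and~\ref{lem:delay}. The only cosmetic difference is that the paper does not redo the Cauchy-product/variable-change computation but directly identifies the inner sum as a coefficient of $F_{S_{e2e}}(\theta,(p_i)_{i\neq1},(q_j),z)$, whereas you sketch that reduction explicitly.
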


\subsection{Case study: diamond network}

\label{sec:diamond}

Here, we consider the diamond network from Figure~\ref{fig:toy-ff}, where two originally independent flows separate and then rejoin. 
If we assume some kind of resource-sharing scheduling policy at server 0, then their outputs are dependent when interfering at server~3.
This network is not tree-reducible.
	
\subsubsection{Transformation into a tree-reducible network}
\label{subsec:ff}

A solution to transform this feedforward network into a tree-reducible network is the \textit{cutting} of flows. 
If we are interested in the performance of flow 1, we can cut flow~2, either between servers~0 and 2 or between servers~2 and 3, as shown in Figure~\ref{fig:ff-cut}. Clearly, we obtain a tree-reducible network and can now apply Theorems \ref{th:e2e-gs-dep} and \ref{th:dep} from the previous subsection. 

The analysis of the two obtained networks follows similar lines, so let us focus on the transformation when flow~2 is cut between servers~2 and 3 (on the right of Figure~\ref{fig:ff-cut}). A flow $2''$ is then created, and it stochastically depends on flow~1, flow~2, server~0 and server~2. 
Using the independence of these flows and servers and results from Section~\ref{sec:perf-pmoo},  we can compute a bounding generating function for the arrival process of flow $2''$:  $$F_{A_{2''}} (\theta, z) = F_{D_2}(\theta,z) = \frac{e^{\theta(\sigma_{A_1}(\theta) + \sigma_{A_2}(\theta) +  \sigma_{S_0}(\theta) + \sigma_{S_2}(\theta))}}{(1- e^{-\theta \left(\rho_{S_0}(\theta) - \rho_{A_1}(\theta) - \rho_{A_2}(\theta)\right)}) (1- e^{-\theta \left(\rho_{S_2}(\theta) - \rho_{A_2}(\theta)\right)})} \frac{1}{1-e^{\theta\rho_{A_2}(\theta)}z}.$$

\begin{figure}[tb]
	\centering
	\resizebox{0.4\textwidth}{!}{%
		\begin{tikzpicture}
			[server/.style={shape=rectangle,draw,minimum height=.8cm,inner xsep=3ex}]
			\node[server,name=S0] at (0,0) {0};
			\node[server,name=S1] at (2,-1) {1};
			\node[server,name=S2] at (2,1) {2};
			\node[server,name=S3] at (4,0) {3};	
			\draw[thick, ->, red] (-1, -0.2) node [left] {$1$}-- (0, -0.2)-- (2, -1.2) -- (4, -0.2) -- (5, -0.2);
			\draw[thick, ->, blue] (-1, 0.2)  node [left] {$2$} -- (0, 0.2)-- (2, 1.2) -- (4, 0.2) -- (5, 0.2);
		\end{tikzpicture}
	}%
	\caption{Diamond network.}
	\label{fig:toy-ff}
\end{figure}
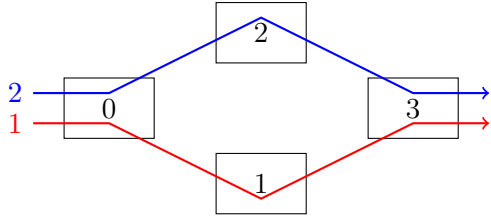

\begin{figure}[t]
	\centering
	\begin{minipage}{0.48\textwidth}
		\centering
		\begin{tikzpicture}
			[server/.style={shape=rectangle,draw,minimum height=.8cm,inner xsep=3ex}]
			\node[server,name=S0] at (0,0) {0};
			\node[server,name=S1] at (2,-1) {1};
			\node[server,name=S2] at (2,1) {2};
			\node[server,name=S3] at (4,0) {3};	
			\draw[thick, ->, red] (-1, -0.2) node [left] {$1$}-- (0, -0.2)-- (2, -1.2) -- (4, -0.2) -- (5, -0.2);
			\draw[thick, ->, blue] (1, 1.2) node [left] {$2'$} -- (2, 1.2) -- (4, 0.2) -- (5, 0.2);
			\draw[thick, ->, blue] (-1, 0.2)  node [left] {$2$} -- (1, 0.2);
		\end{tikzpicture}
	\end{minipage}\hfill
	\begin{minipage}{0.48\textwidth}
		\centering
		\begin{tikzpicture}
			[server/.style={shape=rectangle,draw,minimum height=.8cm,inner xsep=3ex}]
			\node[server,name=S0] at (0,0) {0};
			\node[server,name=S1] at (2,-1) {1};
			\node[server,name=S2] at (2,1) {2};
			\node[server,name=S3] at (4,0) {3};	
			\draw[thick, ->, red] (-1, -0.2) node [left] {$1$}-- (0, -0.2)-- (2, -1.2) -- (4, -0.2) -- (5, -0.2);
			\draw[thick, ->, blue] (-1, 0.2)  node [left] {$2$} -- (0, 0.2)-- (1, 1.2) -- (3, 1.2);
			\draw[thick, ->, blue] (3, 0.2)  node [left] {$2'' $} -- (5, 0.2);
		\end{tikzpicture}
		
	\end{minipage}
	\caption{Two possibilities for cutting flows to transform the diamond network into a tree-reducible network.}
	\label{fig:ff-cut}
\end{figure}
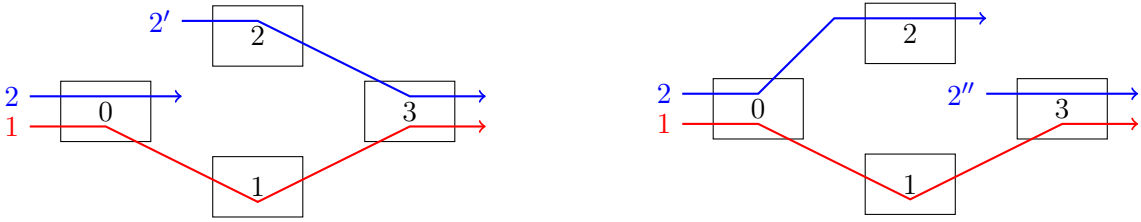

Now, when analysing flow~1, we do not need to consider server~2 (a simple case of tree reduction). The dependency graph of the processes has one non-trivial connected component $\{A_1, A_2, A_{2''}, S_0\}$, and the other components $\{S_1\}$ and $\{S_3\}$ are singletons. 
In the last step, in order to compute the performance bounds for flow~1, we first obtain the bounding generating function of the end-to-end service with Theorem~\ref{th:e2e-gs-dep}: 
\begin{align*} 
	F_{S_{\mathrm{e2e}}}(\theta, z) = & \frac{e^{\theta(\sigma_{A_1}(p_{2''}\theta) + \sigma_{A_2}(p_2\theta) + \sigma_{A_2}(p_{2''} \theta) + \sigma_{S_0}(q_0\theta) + \sigma_{S_0}(p_{2''}\theta) + \sigma_{S_1}(\theta) + \sigma_{S_2}(p_{2''}\theta)+ \sigma_{S_3}(\theta))}}{(1-e^{-p_{2''}\theta(\rho_{S_0}(p_{2''}\theta) - \rho_{A_1}(p_{2''}\theta) - \rho_{A_2}(p_{2''}\theta))})^{\frac{1}{p_{2''}}}(1-e^{-p_{2''}\theta(\rho_{S_2}(p_{2''}\theta) - \rho_{A_2}(p_{2''}\theta))})^{\frac{1}{p_{2''}}}} \\ 
	& \hspace{3mm} \cdot \frac{1}{(1-e^{-\theta(\rho_{S_0}(q_0\theta) - \rho_{A_2}(p_2\theta))}z) (1- e^{-\theta\rho_{S_1}(\theta)}z) (1- e^{-\theta(\rho_{S_3}(\theta) - \rho_{A_2}(p_{2''}\theta))}z)}.
\end{align*}
 Theorem~\ref{th:dep} can now be applied to compute, for instance, the delay violation probability with the relation $1 / p_1 + 1 / p_2 + 1 / p_{2''} + 1 / q_0 = 1.$ 

\subsubsection{Numerical evaluation}

Here, we provide some numerical results for delay bounds for the diamond network from Figure~\ref{fig:toy-ff}.
We compare a sequential analysis, representing the state of the art, with the aforementioned cutting technique to enable the PMOO analysis;  simulation results are provided as well.
The results are displayed in Figure~\ref{fig:diamond-network-results}.

We can observe that both cuts followed by PMOO-AC calculations lead to significantly tighter delay bounds than the standard approach.
To be precise, for a violation probability of $ 10^{-6} $ the cut between servers 2 and 3 (called Cut~(2,3) in Figure~\ref{fig:diamond-network-results}) as well as Cut~(0,2) lead to a delay bound of 14, while the standard bound results in 27.
Moreover, the cutting-based PMOO-AC is able to capture almost the same scaling as the simulations, which is in clear contrast to the standard bound.
It is striking that the gap to the simulation does not increase significantly compared to the independent case, despite the application of Hölder's inequality.
This indicates that the inaccuracy induced by Hölder's inequality cannot only be assessed by the number of its applications, but requires a detailed look at the dependency structure. 

\begin{figure*}[t]
	\centering
	\subfloat[\centering Exponential distribution with~$~\lambda_1 = \lambda_2 = 1.8 $ ]{
		\includegraphics[width=0.3\textwidth]{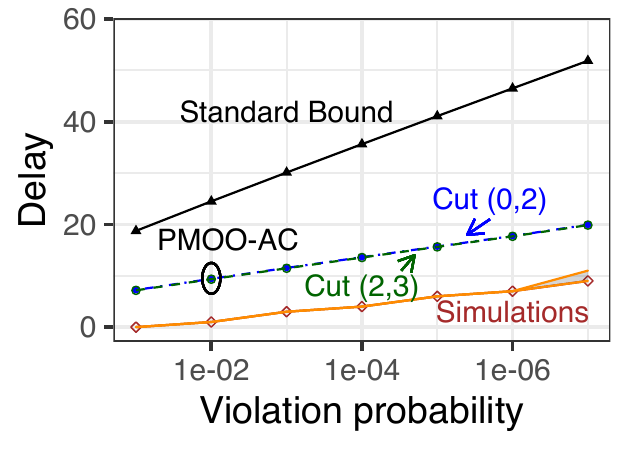}
	}
	\subfloat[\centering Weibull distribution with~$~\lambda_1 = \lambda_2 = 0.9 $]{
		\includegraphics[width=0.3\textwidth]{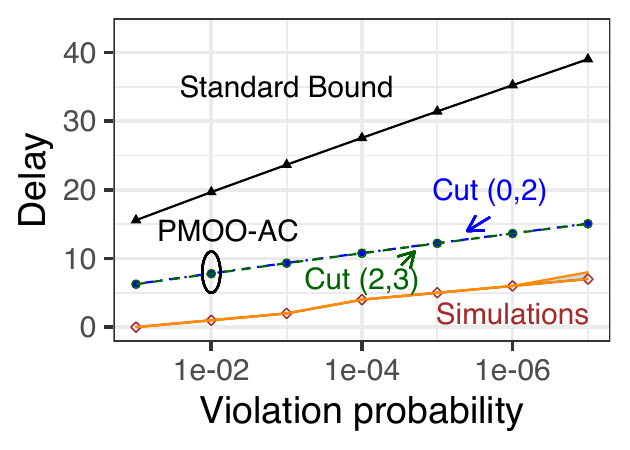}
	}
	\subfloat[\centering MMOO with $ p_{\mathrm{on}, j} = 0.5,$ $p_{\mathrm{off}, j} = 0.5, $ $ P_j = 1.3, $ $ j \in \{1,2\} $]{
		\includegraphics[width=0.3\textwidth]{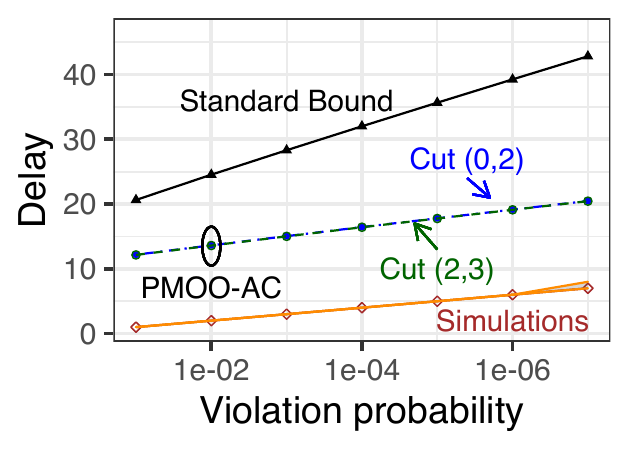}
	}
	\vspace{-2mm}
	\caption{Delay bounds and simulations for the diamond network with server rates $ C_i = 2.0 $ for $ i = 1, \dots, 4 $. \label{fig:diamond-network-results}}
	\vspace{-3mm}
\end{figure*}

\subsection{Discussion}
\label{sec:discussion}
In this section, we have compared state-of-the-art bounds with our new results for a canonical network with rejoining flows;  
here are only two options to cut the network, leading to almost the same bounds. 
While the cutting technique can generally be applied in larger networks, it may no longer be an option to simply try all (combinations of) cuts as these generally grow exponentially in the size of the network. 
A clever search for "good" cuts is conceivable -- similar to advanced network analysis techniques in deterministic network calculus, for instance using machine learning techniques as in \cite{GB19} -- but left for future work. 

	\begin{figure}[b]
	\centering
	\begin{tikzpicture}
		[ server/.style={shape=rectangle,draw,minimum height=.8cm,inner xsep=3ex}]
		\node[server,name=S01] at (0,-1) {$0$};
		\node[server,name=S02] at (0,1) {$0'$};
		\node[server,name=S1] at (2,-1) {$1$};
		\node[server,name=S2] at (2,1) {$2$};
		\node[server,name=S3] at (4,0) {$3$};
		\draw[thick, ->, red] (-1, -0.8) node [left] {$1$} -- (2, -0.8 )-- (4, -0.2) -- (5, -0.2);
		\draw[thick, ->, blue] (-1, -1.2) node [left] {$2'$}-- (1, -1.2); 
		\draw[thick, ->, blue] (-1, 0.8) node [left] {$2$}-- (2, 0.8 )-- (4, 0.2) -- (5, 0.2);
		\draw[thick, ->, red] (-1, 1.2)node [left] {$1'$} -- (1, 1.2);
	\end{tikzpicture}
	\caption{Unfolding of the diamond network of Figure~\ref{fig:toy-ff}.}
	\label{fig:unfold}
\end{figure}
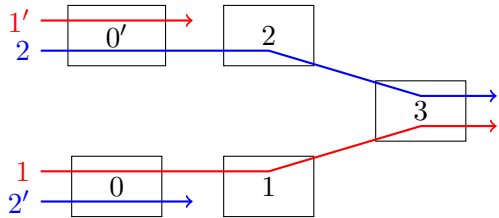

In fact, other strategies to transform a feedforward into a tree-reducible network exist and may lead to more accurate or computationally efficient bounds. For example, a so-called {\em unfolding}, depicted in Figure~\ref{fig:unfold} for the diamond example. 
Server 0 is duplicated in servers 0 and 0', flow 1 (resp. flow 2) is duplicated in flows 1 and 1' (resp. flows 2 and 2'). 
The corresponding arrival and service are the same, and are therefore dependent. 
Under the assumption of independence of flows and servers, we obtain a dependence graph with non-trivial connected components  $G_0 = \{S_0, S_{0'}\}$, $G_1 = \{A_1, A_{1'}\}$, $G_2 = \{A_2, A_{2'}\}$. 
The connected components are singletons.

%


To compute the performance bounds for flow 1, we can apply Theorem~\ref{th:dep} using the following bounding generating function of the end-to-end server:
\begin{align*} 
	F_{S_{\mathrm{e2e}}}(\theta, z) =&  \frac{e^{\theta(\sigma_{A_1}(p_{1'}\theta) + \sigma_{A_2}(p_2\theta) + \sigma_{A_2}(p_{2'}\theta) + \sigma_{S_0}(q_0\theta) + \sigma_{S_0}(q_{0'}\theta)  + \sigma_{S_1}(\theta) + \sigma_{S_2}(\theta)+ \sigma_{S_3}(\theta))}}{(1-e^{-\theta(\rho_{S_0}(q_{0'}\theta) - \rho_{A_1}(p_{1'}\theta) - \rho_{A_2}(p_2\theta))})(1-e^{-\theta(\rho_{S_2}(\theta) - \rho_{A_2}(p_2\theta))})} \\ 
	& \hspace{3mm} \cdot \frac{1}{(1 -  e^{-\theta(\rho_{S_0}(q_{0}\theta) - \rho_{A_2}(p_{2'}\theta))}z) (1-  e^{-\theta \rho_{S_1}(\theta)}z)(1 -  e^{-\theta(\rho_{S_3}(\theta) - \rho_{A_2}(p_2\theta))}z)},
\end{align*}
%
with the relations  $1/q_{0} + 1/q_{0'} = 1$, $1/p_{1} + 1/p_{1'} = 1$ and $1/p_{2} + 1/p_{2'} = 1$.

Even in the small diamond network, it is not simple to carry out an analytical comparison between cutting and unfolding. 
On the one hand,  $(1- e^{-pu})^{\frac{1}{p}} \geq 1-e^{-u}$ for $u>0$ and $p>1$, so cutting has an advantage. 
On the other hand,  unfolding provides more freedom in the H\"older parameters: if one sets $p^C_{2''} = q^U_{0'} = p^U_{1'} = p^U_2$, with the superscripts differentiating between parameters from the unfolding ($U$) and cutting ($C$) methods, one obtains $q^U_{0} < q^C_0$, $p^U_1 <  q^C_1$, and $p^U_{2'} < p^C_2$, and unfolding seems advantageous, at least from the point of view of the stability condition.

In our numerical case study for the diamond network, cutting actually outperformed unfolding. Yet, we speculate that the unfolding method could lead to better bounds in larger networks. To that end, we remark that in deterministic network calculus unfolding in some contexts leads to better performance bounds than cutting \cite{Bou20}. 

In conclusion, it is clear that, while we have "opened the door" for an accurate and efficient SNC analysis of general feedforward networks, there is now a "large room" to explore for future work.

	\section{Conclusion}
\label{sec:conclusion}

We have presented a new network analysis method that unleashes the power of the pay multiplexing only once (PMOO) principle in the stochastic network calculus. 
Based on this method, we applied techniques from analytic combinatorics to keep bounds accurate even in rather complex scenarios.  
Equipped with this, it is now possible to calculate rigorous probabilistic performance bounds for tree-reducible networks without incurring any method-pertinent stochastic dependencies.
In numerical evaluations, we observed that we are largely successful in not widening the known simulation-calculation gap further, and, at least, closely capture the scaling of the performance bounds.
We have also made a promising step towards a stochastic network calculus analysis of general feed-forward networks laying the foundation to reuse the PMOO as much as possible. 

While our method can benefit from improvements based on a preliminary network transformation (\textit{e.g.} flow prolongation \cite{NS20}) at no cost, we believe it can also exploit other recent techniques such as the $h$-mitigators, which were successfully applied to sink-tree networks in~\cite{NSS19}. 
As discussed at the end of Section~\ref{sec:dependent}, for future work it is very promising to invest more effort in good strategies for the transformation of large general feedforward networks into tree-reducible ones.
More disruptively, the PMOO method could also be a first step to enable martingale techniques (as in \cite{CPS14, PC14}) in the end-to-end analysis, in order to completely close the simulation-calculation gap. 

\begin{acks}
	We thank the anonymous reviewers and our shepherd Mark S. Squillante for their insightful feedback and guidance.
	This work was partially supported by Huawei Technologies Co., Ltd.
\end{acks}

\bibliographystyle{plain}
\bibliography{biblio.bib}
	
\appendix

\renewcommand{\thesection}{\Alph{section}}

	\section{Additional Proofs of Section~\ref{sec:framework}}
	\subsection{Proof of Equation~\eqref{eq:elie}}
	\label{sec:elie}
		\begin{align*}
			G(z) & = \sum_{n = 0}^\infty (\sum_{m = n}^\infty f_m) z^n 
			= \sum_{n=0}^\infty (\sum_{m =0}^\infty f_m)z^n - \sum_{n=0}^\infty (\sum_{m=0}^{n-1} f_m)z^n \\
			&= F(1) \sum_{n = 0}^\infty z^n - \sum_{n=1}^\infty (\sum_{m=0}^{n-1} f_m) z^n 
			= \frac{F(1)}{1-z} - z  \sum_{n=0}^\infty (\sum_{m=0}^{n} f_m) z^n \\ 
			&= \frac{F(1)}{1-z} - z  \sum_{n=0}^\infty (\sum_{m=0}^n f_mz^m z^{n-m}) 
			= \frac{F(1)}{1-z} - z  \sum_{m = 0}^\infty (f_m z^m \sum_{n = m}^\infty z^{n-m}) \\
			&= \frac{F(1)}{1-z} - z  \frac{F(z)}{1-z}
			= \frac{F(1)- zF(z)}{1-z}.
		\end{align*}

\subsection{Proof of Lemma~\ref{lem:delay}}
\label{sec:delay}
As $F_A(\theta, z)$ is a geometric series, we have $[z^{n+m}]F_A(\theta, z) = e^{\theta\rho_A(\theta) m} [z^{n}]F_A(\theta, z)$. 
 Let us denote $d(t)$ the delay at time $t$. We have for all $T>0$,
 \begin{align}\p(d(t) \geq T) & = \p(A(0, t) > D(0, t + T - 1))& \notag \\ 
 	& \leq \p(\exists s \leq t,~A(s, t) > S(s, t + T - 1)) &\notag\\
 	& \leq \sum_{0\leq s\leq t}\p(A(s, t) > S(s, t+T - 1))&\notag\\
 	&\leq \sum_{0\leq s\leq t-1} \E[e^{\theta (A(s, t) - S(s, t+T - 1))}]& \text{($A(t, t) = 0$ and $S(t, t+T-1) \geq 0$)}\notag\\
 	&= \sum_{0\leq s\leq t-1} \E[e^{\theta A(s, t)}] \E[e^{- \theta S(s, t+T - 1)}]& \text{(independence of $A$ and $S$)}\notag\\
 	&\leq \sum_{0\leq s\leq t-1} [z^{t-s}]F_A(\theta, z) \cdot [z^{t-s+T - 1}]F_S(\theta, z)&\notag\\
 	&\leq \sum_{u >  0}  [z^{u}]F_A(\theta, z) \cdot [z^{u+T - 1}]F_S(\theta, z)& (u\leftarrow t-s)\notag\\
 	&= e^{-\theta\rho_A(\theta) (T-1)}\sum_{u> 0} [z^{u+T - 1}]F_A(\theta, z) \cdot [z^{u+T - 1}]F_S(\theta, z)&\notag\\
 	&= e^{-\theta\rho_A(\theta) (T-1)}\sum_{u \geq  T} [z^{u}]F_A(\theta, z) \cdot [z^{u}]F_S(\theta, z).& \label{eq:delay}
 \end{align}
%
We recognize  the sum of the last terms of a Hadamard product in the right-hand term. Let $h_u = [z^{u}]F_A(\theta, z) \cdot [z^{u}]F_S(\theta, z)$ and $h'_u = \sum_{u\geq T} h_u$. Let $H(z)$ and $H'(z)$ be their corresponding generating functions.  In particular, we have $H(z) = e^{\theta \sigma_A(\theta)}F_S(\theta, e^{\theta\rho_A(\theta)}z)$.
Now, $$\p(d(t) \geq T) \leq e^{-\theta\rho_A(\theta) (T-1)} [z^T] H'(z),$$ and we recognize again a Hadamard product of a geometric series and $H'$, and use Equation~\eqref{eq:elie} to obtain  
\begin{align*}
	F_d(z) & = e^{\theta\rho_A(\theta)} H'(e^{-\theta\rho_A(\theta)}z) 
	= e^{\theta\rho_A(\theta)}\frac{H(1) - e^{-\theta\rho_A(\theta)}zH(e^{-\theta\rho_A(\theta)}z)}{1-e^{-\theta\rho_A(\theta)}z}\\
	&= e^{\theta (\sigma_A(\theta) + \rho_A(\theta))}\frac{ F_S(\theta, e^{\theta\rho_A(\theta)})- e^{-\theta\rho_A(\theta)}zF_S(\theta, z)}{1-e^{-\theta\rho_A(\theta)}z}.
\end{align*}

\section{Proofs and Computations of Section~\ref{sec:pmoo}}
\label{sec:exact_comp} 
\subsection{Proof of Corollary~\ref{cor:simple-pmoo}}
\label{app:simple-pmoo}
%


It suffices to prove that for all $t \in\N$, $[z^t]F_{S_{\mathrm{e2e}}}(\theta, z)\leq [z^t]G(\theta, z)$. For all $t\in \N$, 
\begin{align*}
	[z^t]F_{S_{\mathrm{e2e}}}(\theta, z) = & e^{\theta\sigma_{S_{\mathrm{e2e}}}(\theta)} \sum_{u_1 + \ldots + u_n = t} \prod_{j=1}^n e^{-\theta \rho'_j(\theta)u_j}\\
	= & 	 e^{\theta\sigma_{S_{\mathrm{e2e}}}(\theta)} e^{-\theta\rho'_1(\theta)t}  \sum_{u_1 + \ldots + u_n = t} \prod_{j=1}^n e^{-\theta (\rho'_j(\theta) - \rho'_1(\theta))u_j}\\
	= & e^{\theta\sigma_{S_{\mathrm{e2e}}}(\theta)} e^{-\theta\rho'_1(\theta)t} \sum_{s_1 + s_2 = t}   \left(\sum_{u_1 + \ldots + u_k = s_1}1\right)  \\ & \hspace{2cm} \cdot\left(\sum_{u_{k+1} + \ldots + u_n = s_2} \prod_{j=k+1}^n e^{-\theta (\rho'_j(\theta) - \rho'_1(\theta))u_j}\right)\\
	= & e^{\theta\sigma_{S_{\mathrm{e2e}}}(\theta)} e^{-\theta\rho'_1(\theta)t} \sum_{s_1 + s_2 = t} \binom{s_1 + k -1}{k-1} \\ & \hspace{2cm} \cdot \left(\sum_{u_{k+1} + \ldots + u_n = s_2} \prod_{j=k+1}^n e^{-\theta (\rho'_j(\theta) - \rho'_1(\theta))u_j}\right)\\
	\leq & e^{\theta\sigma_{S_{\mathrm{e2e}}}(\theta)} e^{-\theta\rho'_1(\theta)t} \binom{t + k -1}{k-1}  \left(\sum_{u_{k+1} + \ldots + u_n \leq t} \prod_{j=k+1}^n e^{-\theta (\rho'_j(\theta) - \rho'_1(\theta))u_j}\right)\\
	\leq & e^{\theta\sigma_{S_{\mathrm{e2e}}}(\theta)} e^{-\theta\rho'_1(\theta)t} \binom{t + k -1}{k-1}   \prod_{j=k+1}^n \frac{1}{1-e^{-\theta (\rho'_j(\theta) - \rho'_1(\theta))}} = [z^t]G(\theta, z). 
\end{align*}

\subsection{Proof of Equation~\eqref{eq:sing1}: $F_{S_{\mathrm{e2e}}}(\theta, z)$ has singularities with multiplicities 1 only.}
\label{app:sing1}
	
We first use partial fraction decomposition~\cite{ChangFC73}:
$$\prod_{j=1}^n\frac{1}{1-r_jz} =\sum_{j = 1}^n \prod_{k\neq j}\Big( \frac{1}{1-r_j^{-1}r_k}\Big) \frac{1}{1-r_jz}.$$
and
\begin{multline*}
	\frac{r_0^{-1}}{1-r_0z} \prod_{j=1}^n\frac{1}{1-r_0^{-1}r_j} -
 \frac{z}{1-r_0z}  \prod_{j=1}^n\frac{1}{1-r_jz} \\ = \frac{r_0^{-1}}{1-r_0z} \prod_{j=1}^n\frac{1}{1-r_0^{-1}r_j} -  z\Big[\sum_{j = 0}^n \Big(\prod_{k \geq 0, k\neq j} \frac{1}{1-r_j^{-1}r_k}\Big) \frac{1}{1-r_jz}\Big]
	\\  = r_0^{-1}\prod_{j=1}^n\frac{1}{1-r_0^{-1}r_j} - \sum_{j=1}^n \frac{z}{1-r_0r_j^{-1}}\Big( \prod_{k\geq 1, k\neq j} \frac{1}{1-r_j^{-1}r_k}\Big) \frac{1}{1-r_jz}.
\end{multline*}

In the bounding generating function of the delay, by identifying $r_0$ as $e^{-\theta \rho_{A_1}(\theta)}$ and for all $j>0$, $r_j$ as $e^{-\theta \rho'_j(\theta)}$, we obtain (for ease of presentation, we drop the dependence of the $\sigma$'s and $\rho$'s in $\theta$)
\begin{align*}
	F_d(\theta, z) & = e^{\theta \sigma_{A_1}}\frac{ e^{\theta\rho_{A_1}}F_{S_{\mathrm{e2e}}}(\theta, e^{\theta\rho_{A_1}})- zF_{S_{\mathrm{e2e}}}(\theta, z)}{1-e^{-\theta \rho_{A_1}}z}\\
	& = \frac{e^{\theta(\sigma_{A_1}  + \sigma_{S_{\mathrm{e2e}}} + \rho_{A_1})}}{\prod_{j=1}^n\big( 1-e^{\theta(\rho_{A_1}  - \rho'_j)}\big)} \frac{1}{1-e^{-\theta\rho_{A_1}}z} - \frac{e^{\theta(\sigma_{A_1}  + \sigma_{S_{\mathrm{e2e}}})}z}{1-e^{-\theta\rho_{A_1}}z}\prod_{j=1}^n \frac{1}{1-e^{-\theta\rho'_j}z}\\
	&= \frac{e^{\theta(\sigma_{A_1}  + \sigma_{S_{\mathrm{e2e}}}+ \rho_{A_1})}}{\prod_{j=1}^n \big(1-e^{\theta(\rho_{A_1}  - \rho'_j)}\big)} + \sum_{j=1}^n \frac{e^{\theta(\sigma_{A_1}  + \sigma_{S_{\mathrm{e2e}}})}z}{e^{\theta(\rho'_j- \rho_{A_1})}-1} \Big(\prod_{k\neq j} \frac{1}{1-e^{\theta(\rho'_j - \rho'_k)}} \Big)\frac{1}{1-e^{-\theta\rho'_j}z}. 
\end{align*}
This is of the form $f(z) = a + \sum_{j=1}^n \frac{b_jz}{1-r_j z}$, and the coefficients of $f$ are $[z^0]f(z) = f(0) = a$ and for all $T>0$, 
\begin{equation}
	\label{eq:almost-geometric}
	[z^T]f(z) = \sum_{j=1}^n [z^T]\frac{b_jz}{1-r_jz} = \sum_{j=1}^n  [z^{T-1}]\frac{b_j}{1-r_jz} =  \sum_{j=1}^n b_j r_j^{T-1}.
\end{equation}
As a consequence, for all $T>0$,

\begin{align*}
	[z^T]F_d(\theta, z) & = \sum_{j=1}^n \frac{e^{\theta(\sigma_{A_1}  + \sigma_{S_{\mathrm{e2e}}})}}{e^{\theta(\rho'_j- \rho_{A_1})}-1} \Big(\prod_{k\neq j} \frac{1}{1-e^{\theta(\rho'_j - \rho'_k)}}\Big) e^{-\theta\rho'_j(T-1)} \\
	& =  \sum_{j=1}^n \frac{e^{\theta(\sigma_{A_1}  + \sigma_{S_{\mathrm{e2e}}} + \rho_{A_1})}}{1-e^{\theta(\rho_{A_1} - \rho'_j)}} \Big(\prod_{k\neq j} \frac{1}{1-e^{\theta(\rho'_j - \rho'_k)}}\Big) e^{-\theta\rho'_jT}. 
\end{align*}

\subsection{Proof of Equation~\eqref{eq:uniform}: $F_{S_{\mathrm{e2e}}}(\theta,z)$ has exactly one singularity}
\label{app:uniform}
In this subsection, we assume that all $\rho'_j(\theta)$, $j\in\{1, \ldots, n\}$ are equal. 

We also use partial fractional decomposition: 
$$\frac{1}{(1-r_0z)(1-r_1z)^n} = \frac{1}{(1-r_0z)(1-r_0^{-1}r_1)^n} - \frac{r_0^{-1}r_1}{(1-r_0^{-1}r_1)^{n+1}} \left[\sum_{i=1}^{n}\left( \frac{1-r_0^{-1}r_1}{1-r_1z}\right)^{i}\right],$$
and 
\begin{multline*}
	\frac{r_0^{-1}}{(1-r_0z)(1-r_0^{-1}r_1)^n} - \frac{z}{(1-r_0z)(1-r_1z)^n} = \\ \frac{r_0^{-1}}{(1-r_0^{-1}r_1)^n} + \frac{r_0^{-1}r_1z}{(1-r_0^{-1}r_1)^{n+1}} \left[\sum_{i=1}^{n}\left( \frac{1-r_0^{-1}r_1}{1-r_1z}\right)^{i}\right],
\end{multline*}

The bounding generating function of the delay can be obtained by replacing $r_0$ by $e^{-\theta\rho_{A_1}}$ and $r_1$ by $e^{-\theta\rho'_{1}}$: 
\begin{multline*}
	F_d(\theta, z)  = e^{\theta(\sigma_{A_1} + \sigma_{S_{\mathrm{e2e}}})} \left(\frac{e^{\theta\rho_{A_1}}}{(1-e^{-\theta(\rho'_1 - \rho_{A_1})})^n} \right. \\ \left.+ e^{\theta(\rho_{A_1} - \rho'_{1})}z \left[\sum_{i=1}^{n}\frac{1}{(1-e^{-\theta(\rho'_1 - \rho_{A_1})})^{n-i+1}}  \frac{1}{(1-e^{-\theta\rho'_1}z)^{i}}\right]\right).
\end{multline*}

If $f(z) = \frac{1}{(1-rz)^m}$, then its coefficients are $[z^T]f(z) = \binom{T+m-1}{T} r^T$. Then, following the same computations as in~\eqref{eq:almost-geometric}, one can deduce that for all $T>0$,
\begin{align*}
	[z^T]F_d(\theta, z) &=e^{\theta(\sigma_{A_1} + \sigma_{S_{\mathrm{e2e}}} + \rho_{A_1} - \rho'_{1})}	\left[\sum_{i=1}^{n}\frac{1}{(1-e^{-\theta(\rho'_1 - \rho_{A_1})})^{n-i+1}}  \binom{T + i-2}{T-1}e^{-\theta\rho'_{1}(T-1)} \right]\\
	& = e^{\theta(\sigma_{A_1} + \sigma_{S_{\mathrm{e2e}}} + \rho_{A_1})}	\left[\sum_{i=1}^{n}\frac{1}{(1-e^{-\theta(\rho'_1 - \rho_{A_1})})^{n-i+1}}  \binom{T + i-2}{T-1}e^{-\theta\rho'_{1}T} \right].
\end{align*}

\section{Proof of Theorem~\ref{th:dep}}
\label{app:dependent}
\subsection{Bounding generating function of the departure process}
We have
\begin{align*}
	A_1\deconv S_{\mathrm{e2e}} (s, t) & = \sup_{u\leq s} A_1(u, t) - S_{\mathrm{e2e}}(u, s)\\
	& = \sup_{u\leq s}\sup_{\substack{t_j\leq t_{j^{\bu}}\\ t_{\pi_1(1)} = u \\ t_{\pi_1(\ell_1)^{\bu}} = s}} A_1(u, t) - \sum_{j=1}^n S_j(t_j, t_{j^\bu}) + \sum_{i=2}^m A_i(t_{\pi_i(1)}, t_{\pi_i(\ell_i)^\bu}).
\end{align*}

When computing the MGF $\E[e^{\theta (A_1\deconv S_{\mathrm{e2e}} (s, t))}]$, one can take advantage of the partial independence of $A_1$ with the other processes. 

For all $(p_i), (q_j)$ such that for all $k\in\{1, \ldots, K\}$, $\sum_{A_i\in G_k} \frac{1}{p_i} + \sum_{S_j\in G_k} \frac{1}{q_j}=1$, we have, using Lemma~\ref{lem:partialH} in the second line, 

\begin{align*}
	\E[e^{\theta (A_1\deconv S_{\mathrm{e2e}} (s, t))}]  
	& \leq \sum_{u\leq s}\sum _{\substack{t_j\leq t_{j^{\bu}}\\ t_{\pi_1(1)} = u\\ t_{\pi_1(\ell_1)^{\bu}} = s}} \E[e^{\theta( A_1(u, t) - \sum_{j=1}^n S_j(t_j, t_{j^\bu}) + \sum_{i=2}^m A_i(t_{\pi_i(1)}, t_{\pi_i(\ell_i)^\bu}))}]\\
	& \leq \sum_{u\leq s}\sum _{\substack{t_j\leq t_{j^{\bu}}\\ t_{\pi_1(1)} = u\\ t_{\pi_1(\ell_1)^{\bu}} = s}} e^{\theta(\sigma_{A_1}(p_1\theta) + \rho_{A_1}(p_1\theta) (t-u))}\left(\prod_{j = 1}^n e^{-\theta(\sigma_{S_j}(q_j\theta) - \rho_{S_j}(q_j\theta)(t_{j^{\bu}} - t_j))}\right.  \\ & \hspace{5cm} \left.\prod_{i=2}^m e^{\theta(\sigma_{A_i}(p_i\theta) + \rho_{A_i}(p_i\theta)(t_{\pi_i(\ell_i)^\bu}- t_{\pi_i(1)}))}\right)\\
	& = \sum_{u\leq s} e^{\theta(\sigma_{A_1}(p_1\theta) + \rho_{A_1}(p_1\theta) (t-u))} \cdot \sum _{\substack{t_j\leq t_{j^{\bu}}\\ t_{\pi_1(1)} = u\\ t_{\pi_1(\ell_1)^{\bu}} = s}}  \left( \prod_{j=1}^n e^{-\theta(\sigma_{S_j}(q_j\theta) - \rho_{S_j}(q_j\theta)(t_{j^{\bu}} - t_j))}\right. \\*
	 &
	\hspace{5cm} \left.\prod_{i=2}^m e^{\theta(\sigma_{A_i}(p_i\theta) + \rho_{A_i}(p_i\theta)(t_{\pi_i(\ell_i)^\bu}- t_{\pi_i(1)}))}\right)\\
	& = \sum_{u\leq s}[z^{t-u}]F_{A_1}(\theta, p_1, z) [z^{s-u}]F_{S_{\mathrm{e2e}}}(\theta, (p_i), (q_j), z) .
\end{align*}

Similar calculations to those in Lemma~\ref{lem:output} lead to  
\begin{align*}
	F_{D_{\mathrm{e2e}}}(\theta, (p_i),(q_j), z) & = \frac{e^{\theta \sigma_{A_1}(p_1\theta)} F_{S_{\mathrm{e2e}}}(\theta, (p_i)_{i\neq 1}, (q_j), e^{\theta \rho_{A_1}(p_1\theta)})}{1-e^{\theta \rho_{A_1}(p_1\theta)}z}.
\end{align*}


\subsection{Bounding generating function of the delay}
From the proof of Lemma~\ref{lem:delay} and then Lemma~\ref{lem:partialH}, 
\begin{align*}
	\p(d(t) \geq T)  &\leq \sum_{0\leq s\leq t - 1} \E[e^{\theta (A_1(s, t) - S_{\mathrm{e2e}}(s, t+T-1))}]\\
	&\leq \sum_{0\leq s\leq t-1} \sum _{\substack{t_j\leq t_{j^{\bu}}\\ t_{\pi_1(1)} = s\\ t_{\pi_1(\ell_1)^{\bu}} = t+T - 1 }} \E[e^{\theta( A_1(s, t) - \sum_{j=1}^n S_j(t_j, t_{j^\bu}) + \sum_{i=2}^m A_i(t_{\pi_i(1)}, t_{\pi_i(\ell_i)^\bu}))}]\\
	& \leq \sum_{s\leq t-1}\sum _{\substack{t_j\leq t_{j^{\bu}}\\ t_{\pi_1(1)} = s\\ t_{\pi_1(\ell_1)^{\bu}} = t+T-1}}  e^{\theta(\sigma_{A_1}(p_1\theta) + \rho_{A_1}(p_1\theta) (t-s))}  \left(\prod_{j=1}^n e^{-\theta(\sigma_{S_j}(q_j\theta) - \rho_{S_j}(q_j\theta)(t_{j^{\bu}} - t_j))}\right. \\ 
	& \hspace{5cm} \left.\prod_{i=2}^m e^{\theta(\sigma_{A_i}(p_i\theta) + \rho_{A_i}(p_i\theta)(t_{\pi_i(\ell_i)^\bu}- t_{\pi_i(1)}))}\right)\\
	& \leq \sum_{s\leq t-1} [z^{t-s}]F_{A_1}(\theta, p_1, z) \cdot [z^{t+T-s-1}]F_{S_{\mathrm{e2e}}}(\theta,(p_i)_, (q_j), z).
\end{align*}
Following the same lines as in Lemma~\ref{lem:delay} leads to 
\begin{multline*}
	F_{d}(\theta, (p_i), (q_j), z) = \frac{{e^{\theta (\sigma_{A_1}(p_1\theta) + \rho_{A_1}(p_1\theta))}}( F_{S_{\mathrm{e2e}}}(\theta, (p_i)_{i\neq 1}, (q_j), e^{\theta \rho_{A_1}(p_1\theta)})}{1-e^{-\theta \rho_{A_1}(p_1\theta)}z} - \\ \frac{ e^{\theta \sigma_{A_1}(p_1\theta)}z F_{S_{\mathrm{e2e}}}(\theta, (p_i)_{i\neq 1}, (q_j), z))}{1-e^{-\theta \rho_{A_1}(p_1\theta)}z}.
\end{multline*}

%
%
%
%
%
%
%
%
%
%

\end{document}